
\documentclass{article}
\usepackage{geometry}
\geometry{left=2.8cm, right=2.8cm, top=2cm, bottom=2cm}

\usepackage[usenames,dvipsnames]{xcolor}
\usepackage{tikz}


\usepackage{amssymb,amsfonts,amsthm,amsmath,thmtools}
\usepackage{enumerate}
\usepackage{mathrsfs,mathtools}
\usepackage{titlesec}
\usepackage{booktabs, tabularx} 
\usepackage{graphicx} 
\usepackage{bbm}
\usepackage[hidelinks]{hyperref}
\hypersetup{
  colorlinks,
  citecolor=Violet,
  linkcolor=Black,
  urlcolor=Blue}

\usepackage{cleveref}
\usepackage{wrapfig}
\usepackage{changepage}

\def \cA {\mathcal{A}}
\def \cB {\mathcal{B}}
\def \cC {\mathcal{C}}
\def \cD {\mathcal{D}}

\def \cF {\mathcal{F}}
\def \cG {\mathcal{G}}

\def \cL {\mathcal{L}}
\def \cM {\mathcal{M}}

\def \cQ {\mathcal{Q}}
\def \cR {\mathcal{R}}
\def \cS {\mathcal{S}}

\def \bE {\mathbb{E}}

\def \bN {\mathbb{N}}

\def \bP {\mathbb{P}}
\def \P {\mathbb{P}}
\def \bQ {\mathbb{Q}}
\def \R {\mathbb{R}}

\def \bT {\mathbb{T}}

\def\cone{\operatorname{cone}}
\def\conv{\operatorname{conv}}

\def\AVaR{\operatorname{AVaR}}

\DeclareMathOperator*{\esssup}{ess\,sup}
\DeclareMathOperator*{\essinf}{ess\,inf}

\def \half {\frac{1}{2}}

\let\phi\varphi
\let\epsilon\varepsilon

\usepackage{xspace}

\newcommand{\one}[1]{\mathbbm{1}_{#1}}
\newcommand{\ind}[1]{\mathbbm{1}_{\left\{{#1} \right\}}}

\newcommand{\condE}[2]{\mathbb{E}\left[ \left. #1 \, \right| #2 \right]}

\newcommand{\V}{\mathbf{V}}

\newcommand{\wt}[1]{ \widetilde{ #1 }}

\newcommand{\for}{ \text{ for }}
\newcommand{\q}[1]{\quad \text{ #1 } \quad}
\newcommand{\qq}[1]{\qquad \text{ #1 } \qquad}
\def\qiq{\quad \implies \quad}

\newcommand{\ul}[1]{{#1}}
\newcommand{\ol}[1]{\overline{#1}}

\renewcommand \emptyset \varnothing





\newtheorem{thm}{Theorem}[section]
\newtheorem{theorem}[thm]{Theorem}

\newtheorem{prop}[thm]{Proposition}
\newtheorem{lem}[thm]{Lemma}

\newtheorem{claim}[thm]{Claim}

\theoremstyle{definition}
\newtheorem{defn}[thm]{Definition}

\newtheorem{eg}[thm]{Example}

\theoremstyle{remark}
\newtheorem{rmk}[thm]{Remark}

\renewcommand\thmcontinues[1]{Continued}

\newcommand{\cAn}{\cA_0}
\newcommand{\uD}{\ul \cD}
\newcommand{\uB}{\ul \cB}
\newcommand{\ce}{\varepsilon}

\title{On representing and hedging claims for coherent risk measures}
\usepackage{authblk}

\author[1,2]{Saul Jacka\thanks{Saul D. Jacka gratefully acknowledges funding received from the EPSRC grant EP/P00377X/1 and is also grateful to the Alan Turing Institute for their financial support under the EPSRC 
grant EP/N510129/1. E-mail: \textit{s.d.jacka@warwick.ac.uk}\\Department of Statistics\\University of Warwick\\Coventry CV4 7AL, UK}}
\author[1]{Seb Armstrong \thanks{Seb Armstrong gratefully acknowledges funding received from the EPSRC Doctoral Training Partnerships grant EP/M508184/1. E-mail: \textit{seb.armstrong@gmail.com}\\Department of Statistics\\University of Warwick\\Coventry CV4 7AL, UK}}
\author[3]{Abdelkarem Berkaoui \thanks{E-mail: \textit{berkaoui@yahoo.fr}\\College of Sciences\\Al-Imam Mohammed Ibn Saud Islamic University\\P.O. Box 84880\\Riyadh 11681\\Saudi Arabia\\}}

\affil[1]{University of Warwick}
\affil[2]{Alan Turing Institute}
\affil[3]{Imam Muhammad ibn Saud Islamic University}

\date{}
\begin{document}
\maketitle

\begin{abstract}
 We provide a dual characterisation of the weak$^*$-closure of a finite sum of cones in $L^\infty$ adapted to a discrete time filtration $\cF_t$: the $t^{th}$ cone in the sum  contains bounded random variables that are $\cF_t$-measurable.  Hence we obtain a generalisation of Delbaen's m-stability condition \cite{D06} for the problem of reserving in a collection of num\'eraires $\V$, called $\V$-m-stability, provided these cones arise from acceptance sets of a dynamic coherent measure of risk \cite{artzner1997applebaum,ADEH99}. We also prove that $\V$-m-stability is equivalent to time-consistency when reserving in portfolios of $\V$, which is of particular interest to insurers.

\end{abstract}

\paragraph{Keywords:} Coherent risk measures; m-stability; time-consistency; Fatou property; reserving; hedging; representation; pricing mechanism; Average Value at Risk.

\paragraph{AMS subject classification: } 91B24, 46N10, 91B30, 46E30, 91G80, 60E05, 60G99, 90C48.

\section{Introduction}

Insurers reserve for future financial risks  by investing in a suitably prudent asset. Reserving is done in a particular unit of account, typically cash, or any other asset universally agreed to always hold positive value. We call such assets num\'eraires, examples of which include paper assets, such as currencies, or physical commodities.  Reserving a sufficient amount ensures that the risk carried by the insurer is acceptable. In some circumstances, the choice of num\'eraire is  clear; in others, it is not,  for example insurers reserving for claims in multiple currencies. The sufficient amount to reserve is modelled by a coherent measure of risk.

Coherent risk measures were first introduced by Artzner, Delbaen, Eber and Heath \cite{artzner1997applebaum,ADEH99}, in order to give a broad axiomatic definition for monetary measures of risk.  Financial positions are modelled as essentially bounded random variables on a suitable probability space $(\Omega,\cF,\P)$. A coherent risk measure is a real-valued functional on $L^\infty (\Omega,\cF,\P)$ that is cash invariant, monotone, convex, and positive homogeneous; see \cite{FS11}.  A coherent risk measure assigns a real value to every financial position: those with non-positive risk are deemed acceptable. We denote by $\cA$ the set of acceptable claims. It is easily shown that $\cA$ is a cone in $L^\infty$.

A coherent risk measure is a reserving mechanism: we assume that an insurer is making a market in (or at least reserving for) risk according to a coherent risk measure $\rho$ and they charge for or reserve for a random claim $X$ the price $\rho (X)$. Thus the aggregate position of holding the risky claim $X$ and reserving adequately should always be acceptable to the insurer.

 A coherent risk measure $\rho$ satisfies the Fatou property if, for any unifformly bounded sequence $X^n$ converging to $X$ in probability, 
 \[\liminf_n \rho(X^n) \ge \rho(X).\]
 A coherent risk measure satisfies the Fatou property if and only if, for some set of probability measures $\cQ$ absolutely continuous with respect to $\P$, we can represent $\rho$ as 
\[ \rho(X) = \sup_{\bQ \in \cQ} \bE_\bQ[X].\]
Recall that the dual of $L^\infty(\Omega,\cF,\P)$ is the space of all  finitely additive measures on $(\Omega,\cF)$ that are absolutely continuous with respect to $\P$. The Fatou property allows us to restrict our search for dual optimisers to elements in $L^1(\Omega,\cF,\P)$, identified with probability measures through their Radon-Nikodym derivative. We equip the space $L^\infty$ with the weak$^*$ topology $\sigma(L^\infty, L^1)$, so the topological dual is $L^1$. The acceptance set $\cA$ is weak$^*$-closed.

We assume that the insurer can trade at finitely many times $\{0,1,\dots,T\}$. At each time $t$, the insurer can re-evaluate the risk, conditional on the information in the sigma algebra $\cF_t$. A conditional coherent risk measure is the natural generalisation of a coherent risk measure; again,  such a measure $\rho_t$ satisfies the Fatou property if and only if, for a set $\cQ_t$ of $\P$-absolutely continuous probability measures we may represent $\rho_t$ by
\[ \rho_t(X)=\esssup_{ \bQ \in \cQ_t} \bE_{ \bQ}[X| \cF_t] .\]
In what follows, we fix $\cQ_t = \cQ$ for all $t$, and define the time $t$-acceptance set, $\cA_t$, to be the set of all claims $X\in L^\infty(\Omega,\cF,\P)$ with $\rho_t(X) \le 0$.

The simplest act of reserving  is to hold a set amount of cash $\rho(X)$ until the insurer must pay the claim $X$. More generally, starting with an amount $\rho(X)$ of cash, an insurer trades in any financial asset available, constructing a self-financing strategy with a terminal value equal to or exceeding the value of the claim $X$ at maturity. If this strategy is built by trading in the set of assets $\mathbf{V}=(v^0,\dots,v^d)$ as num\'eraires, then we shall say that the claim may be \emph{represented} by the vector $\mathbf{V}$. If we allow ourselves a large enough collection of assets, then representation is always possible: to hedge the bounded claim $X$ we need only buy and hold a claim whose value is $X$. Interest, therefore, should be focused on choosing a parsimonious collection of representing num\'eraires $\mathbf{V}$, and in identifying when such a collection is representing.

We will define a claim $X$ to be predictably representable by $\V$ if, starting from a reserve $\rho(X)$, we may transfer risk through each time period by trading in $\mathbf{V}$ in an acceptable manner, such that the terminal wealth equals the value of the claim: for portfolios $Y_t \in L^\infty(\Omega,\cF_t,\P;\R^{d+1})$, we have
\[ X = \rho_0(X)  + \sum_{t=0}^{T-1} (Y_{t+1} - Y_t) \cdot\mathbf{V}, \]
where each increment satisfies $\rho_t((Y_{t+1} - Y_t) \cdot\mathbf{V}) \le 0$. We write $\cA_t(\mathbf{V})$ for the set of all portfolios in $\mathbf{V}$ that are time-$t$ acceptable. The acceptance set $\cA_0$ is predictably $\mathbf{V}$-representable if it is the weak$^*$ closure of the sum of the  cones $ K_t(\cA,\V):= \cA_t(\mathbf{V}) \cap L^{\infty}(\Omega,\cF_{t+1},\P; \R^{d+1})$, 
\[ \cA_0(\mathbf{V}) = \ol{ \oplus_{t=0}^{T-1}K_t(\cA,\V) }. \]
For $X$ to be \emph{predictably} representable, we mean that $X$ is attainable (representable) as a sum of claims $X=\sum_t C_t$, where each $C_t$ is realised over the time period $(t,t+1]$, and pays out at time $t+1$.  Or equivalently, every element of $\cA_0$ is attainable by a collection of one-period bets in units of $\mathbf{V}$ at times $0,1,\dots,T-1$, and trades at time $1,\dots,T$. 

A key contribution of this paper is to provide the dual characterisation of $\V$-representability. Recall Delbaen's multiplicative stability (henceforth m-stability) condition, on the set of probability measures $\cQ$. We identify probability measures in $\cQ$,  via their Radon-Nikodym derivative, with random variables in the dual cone 
\[\cA_0^* = \{ Z \in L^1: \bE[ZX] \le 0 \quad \forall X \in \cA_0\}.\]  
The dual cone $\cA_0^*$ is m-stable if, for any stopping time $\tau$ and  $Z_1, Z_2 \in \cA_0^*$ 
 such that $\condE{Z_1}{\cF_\tau} = \alpha\condE{Z_2}{\cF_\tau}$, then $\alpha Z_2\in \cA_0(\V)^*$. See \cite{D06}. Likewise, the dual cone $\cA_0(\V)^*$ is $\V$-m-stable if, for any stopping time $\tau$ and  $Z_1, Z_2 \in \cA_0(\V)^*$  such that $\condE{Z_1}{\cF_\tau} = \alpha\condE{Z_2}{\cF_\tau}$, then $\alpha Z_2\in \cA_0(\V)^*$.
To show the equivalence of $\V$-m-stability and $\V$-representability, we present an elegant dual of each summand in the representation $K_t(\ul\cA,\mathbf{V})^*=\cM_t(\ul\cA(\mathbf{V})^*)$, called the \emph{predictable pre-image} of $\cA_0(\mathbf{V})^*$ at time $t$. Aside from being  useful in proving the equivalence of $\mathbf{V}$-predictable representability and predictable $\mathbf{V}$-m-stability, the predictable pre-image of a predictably m-stable convex cone $\cA_0(\mathbf{V})^*$ at time $t$ is a concrete description of the dual of the set of portfolios  held at time $t$ in order to maintain an acceptable position until time $t+1$.

We prove that $\V$-representability is equivalent to time-consistency of the risk measure. A risk measure is time-consistent if $\rho_t = \rho_t \circ \rho_{t+1}$. That is, today's reserve for a claim $X$ is precisely enough to reserve for tomorrow's reserve for $X$; see \cite{gianin2006risk,D06,riedel2004dynamic,roorda2005coherent} for examples of such measures. The sequence $(\rho_t)$ is not necessarily  time-consistent; see for example \cite{boda2006time,CS09}. Considerations of time-consistency are important for banks modelling Risk-Weighted Assets (RWAs) under the Basel III accords. A recent consultative document \cite{BCBS} highlights the change in methodology from using risk measures based on Value at Risk (VaR) to those based on Expected Shortfall (ES), also known as Average Value at Risk (AVaR, see  \cite{EPRWB14}). As shown by Cheridito and Stadje \cite{CS09}, AVaR is not  time-consistent.

In section 2, we elaborate on our generalisations of the  three  properties: namely  $\mathbf{V}$-time-consistency, $\mathbf{V}$-representability, and $\mathbf{V}$-m-stability. Throughout the section we illustrate our definitions with a toy example of Average Value at Risk. The main result of this paper is the equivalence of the three properties. 

In section 3, we provide some examples. In section 4, we prove the main result. We highlight the role that the filtration $(\cF_t)_{t=0,\ldots , T}$ plays.

\section{Pricing measures}

We recall some definitions and concepts. We fix a terminal time $T\in \bN$, a discrete time set $\bT := \{0,1,\dots, T\} $. We fix a probability space $(\Omega,\cF, \P)$, where $\bP$ is the \emph{reference measure} or \emph{objective measure}. The filtration $(\cF_t)_{t\in \bT}$ describes the information available at each time point. The space of all $\P$-essentially bounded $\cF$-measurable random variables is $L^\infty = L^\infty(\Omega,\cF,\P)$; we abbreviate $ L^\infty(\Omega,\cF_t,\P)$ to $L^\infty_t$.  The space of essentially bounded $\R^{d}$-valued random variables is $\cL^\infty(\R^{d})= L^\infty(\Omega,\cF,\P;\R^d)$. We denote the cone of non-negative (respectively strictly positive) essentially bounded random variables by $L^\infty_+$ (respectively  $L^\infty_{++}$). We denote the canonical basis vectors in $\R^{d+1}$ by $e_0,\ldots, e_d$.

At each time $t \in \bT$, we wish to price monetary risks using all information available at that time. Recall the following definition, adapted from \cite{DS05}:
\begin{defn}\label{defn:ccrmp}
A map $\rho_t : L^\infty \to L^\infty_t$ for $t \in \bT$ is a \emph{conditional convex risk measure} if, for all $X, Y \in L^\infty$, it has the following properties:
\begin{itemize}
\item
Conditional cash invariance: for all $m \in L^\infty_t$,
\[ \rho_t (X+ m ) = \rho_t(X) + m\qquad  \P\text{-almost surely;}\]
\item
Monotonicity: if $X \le Y$ ${\P}$-almost surely, then $\rho_t(X) \le \rho_t(Y)$;
\item
Conditional convexity: for all $\lambda \in L^\infty_t$ with $0 \le \lambda \le 1$,
\[ \rho_t(\lambda X + (1- \lambda)Y) \le \lambda \rho_t(X) + (1-\lambda)\rho_t(Y) \qquad  \P\text{-almost surely;}\]
\item
Normalisation: $\rho_t(0)=0$ ${\P}$-almost surely.
\end{itemize}
Furthermore, a conditional convex risk measure is called \emph{coherent} if it also satisfies 
\begin{itemize}
\item
Conditional positive homogeneity: for all $\lambda\in L^\infty_t$ with $\lambda \ge 0$,
\[ \rho_t(\lambda X)= \lambda\rho _t (X) \qquad  \P\text{-almost surely.} \]
\end{itemize}
\end{defn}
Our interest lies chiefly in reserving for and pricing liabilities. We see a positive random variable $X$ as a gain, and a negative $X$ as a loss, which explains the choice of sign in the cash invariance property, and the direction of monotonicity.
\begin{defn}
A convex risk measure satisfies the \emph{Fatou property} if, for any bounded sequence $(X^n)_{n \ge 1} \subset L^\infty$ converging to $X\in L^\infty$ in probability, we have 
\[ \rho_t(X) \le \liminf_{n\to \infty} \rho_t(X^n).\]
\end{defn}
  The Fatou property is equivalent to \emph{continuity from above}: $\rho_t$ is continuous from above if, whenever $(X^n)_{n \ge 1} \subset L^\infty$ is a non-increasing sequence such that $X^n \downarrow X$ $\P$-a.s., then
\[ \rho_t(X^n) \downarrow \rho_t(X) \qquad \P\text{-a.s. as }n \to \infty \]

\begin{defn}
A \emph{dynamic coherent risk measure} is a collection $\rho = (\rho_t)_{t=0,\dots,T}$, where each $\rho_t$ is a conditional coherent risk measure satisfying the Fatou property with representing set of measures $\cQ$:
\[ \rho_t(X)=\esssup_{ \bQ \in \cQ} \bE_{ \bQ}[X| \cF_t] .\]
\end{defn}
The \emph{acceptance set} of a conditional coherent risk measure $\rho_t : L^\infty \to L^\infty_t$ is 
\[ \cA_t = \{ X \in L^\infty : \rho_t(X) \le 0\}. \]

For the following results, we refer the reader to \cite{FS11} and \cite{DS05}. We equip the space $ L^\infty$ with the weak$^*$ topology $\sigma( L^\infty,  L^1)$, so  that the topological dual will be $ L^1$. Recall that a set $\cC$ of claims is arbitrage-free whenever
\[ \cC \cap L^\infty_+ = \{ 0\}.\]
\begin{prop}\label{prop:background}
For each $t$, define $\cA_t$ to be the acceptance set of the dynamic conditional \emph{coherent} risk measure  $\rho_t : L^\infty \to L^\infty_t$ satisfying the Fatou property.

Then  $ \cA_t$ is a weak$^*$-closed\footnote{in $ L^\infty$, i.e., $\cA_t$ is closed in the topology $\sigma( L^\infty,  L^1)$} convex cone that is  stable under multiplication by bounded positive $ \cF_t$-measurable random variables, contains  $ L^\infty_- $, and is arbitrage-free.
\end{prop}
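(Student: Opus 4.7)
The plan is to verify each of the five properties by direct appeal to the axioms of Definition 2.1 together with the Fatou representation; the only real work is the weak$^*$-closedness, which passes through a Krein--Smulian-type argument.

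\textbf{Cone, convexity, and $L^\infty_t$-scaling.} I would begin with the routine algebraic properties. If $X, Y \in \cA_t$ and $\lambda \in L^\infty_t$ with $0 \le \lambda \le 1$, conditional convexity yields
\[
\rho_t(\lambda X + (1-\lambda)Y) \le \lambda \rho_t(X) + (1-\lambda)\rho_t(Y) \le 0,
\]
giving convexity. Conditional positive homogeneity gives $\rho_t(\lambda X) = \lambda \rho_t(X) \le 0$ for any $\lambda \in L^\infty_t$ with $\lambda \ge 0$, simultaneously proving the cone property (take $\lambda$ constant) and stability under multiplication by bounded non-negative $\cF_t$-measurable scalars. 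Normalisation supplies $0 \in \cA_t$.

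\textbf{Containment of $L^\infty_-$.} For $X \le 0$ a.s., monotonicity combined with $\rho_t(0) = 0$ gives $\rho_t(X) \le \rho_t(0) = 0$, so $X \in \cA_t$.

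\textbf{Weak$^*$-closedness.} This is the main technical step and where I expect the work to be. The Fatou property immediately implies that $\cA_t$ is sequentially closed under uniformly bounded convergence in probability: if $(X^n) \subset \cA_t$ is uniformly bounded and $X^n \to X$ in probability, then
\[
\rho_t(X) \le \liminf_n \rho_t(X^n) \le 0,
\]
so $X \in \cA_t$. To upgrade this to weak$^*$-closedness, I would invoke the Krein--Smulian theorem: a convex subset of $L^\infty$ is weak$^*$-closed iff its intersection with every ball $\{\|X\|_\infty \le k\}$ is weak$^*$-closed. On a norm-bounded subset of $L^\infty$, the weak$^*$ topology is metrised by convergence in probability (via the standard Komlós/Mazur reduction to a.s.-convergent convex combinations), so closedness under bounded convergence in probability implies weak$^*$-closedness of each $\cA_t \cap \{\|X\|_\infty \le k\}$, and therefore of $\cA_t$ itself.

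\textbf{Arbitrage-freeness.} Suppose $X \in L^\infty_+ \cap \cA_t$. The Fatou representation gives $\bE_\bQ[X \mid \cF_t] \le 0$ a.s.\ for every $\bQ \in \cQ$, and $X \ge 0$ forces the reverse inequality, so $\bE_\bQ[X \mid \cF_t] = 0$ for every $\bQ \in \cQ$. Since the representing family $\cQ$ is taken to be dominating (so that the reference measure $\P$ is absolutely continuous with respect to the supremum of $\cQ$, which is implicit in having $\rho_t$ a well-defined coherent risk measure on $L^\infty(\Omega,\cF,\P)$), this yields $X = 0$ $\P$-a.s. The only subtle point is verifying that the representation is rich enough to separate non-zero positive claims from $0$; this is the standing convention in Delbaen's framework and I would cite Föllmer--Schied / Delbaen--Schachermayer rather than re-prove it.
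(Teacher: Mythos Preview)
The paper does not give a proof of this proposition at all: immediately before the statement it writes ``For the following results, we refer the reader to \cite{FS11} and \cite{DS05}'' and leaves it at that. Your write-up is therefore not being compared against an argument in the paper but against the standard proofs in F\"ollmer--Schied and Detlefsen--Scandolo, and it matches those in outline: axioms for the algebraic properties, Fatou plus Krein--\v{S}mulian for weak$^*$-closedness, and the representation for arbitrage-freeness.

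Two small points. First, your sentence ``on a norm-bounded subset of $L^\infty$, the weak$^*$ topology is metrised by convergence in probability'' is not literally true (e.g.\ Rademacher sequences converge weak$^*$ but not in probability); what you actually need, and what the cited references prove, is the weaker fact that a \emph{convex} subset of a norm ball is weak$^*$-closed iff it is closed under bounded convergence in probability. Your parenthetical about Koml\'os/Mazur indicates you have the right mechanism in mind, so this is a phrasing issue rather than a gap. Second, you are right to flag the arbitrage-freeness step as the one requiring an extra standing hypothesis (some equivalent measure in $\cQ$, or sensitivity of $\rho_t$); the paper's definitions do not make this explicit, and without it the conclusion $\cA_t\cap L^\infty_+=\{0\}$ can fail, so deferring to the cited references is the honest move here.
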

\begin{rmk}
From now on, to emphasis that a coherent risk measure is, in general conditional, we shall refer to the acceptance set $\cA_0$ rather than $\cA$
\end{rmk}

\paragraph{Num\'eraires} A \emph{num\'eraire} is defined to be a random variable $v \in L^\infty_{++}$ such that $1/v \in L^\infty_{++}$. 
We shall from here on fix a finite collection of num\'eraires $\mathbf{V}=(v^0, \dots, v^d)$, with $v^0 \equiv 1$.

\subsection{Time-consistency}

In this and the subsequent sections we identify the probability measures $\bQ$ of the set $\cQ$ with their Radon-Nikodym derivatives $\frac{d\bQ}{d\P}$. We trust that which version is to be used will be clear from the context. The following definition is taken from Acciaio et al. \cite{AFP12}.

\begin{defn}\label{def:stdTC}
A dynamic coherent risk measure for random variables $(\ul\rho_t)_{t\in \bT}$ is \emph{(strongly)  time-consistent} if for all $t \le T-1$, and for all $X\in L^\infty$,
\[ \rho_{t}(X) = \rho_{t}(\rho_{t+1}(X)) . \]
\end{defn}

We note that the reserve for $X$ at time $t$ is $\rho_t(X)$. The generalisation of strong time-consistency to $\V$-time-consistency is:

\begin{defn} A dynamic convex risk measure $\rho=(\rho_t)_{t=0,\dots,T}$ is \emph{predictably $\mathbf{V}$-time-consistent} if,  for any $X\in L^\infty$ and any $t<T$, we have
\begin{equation}\label{time con}
\rho_t(X)=\essinf \{\rho_t(Y.V):\; Y\in L^\infty(\cF_{t+1},\R^{d+1})\text{ and }X-Y\cdot \V\in\cA_{t+1}\}
\end{equation}
\end{defn}

It is easy to check that if $\V\equiv 1$,  then strong time-consistency {\em is} $\V$-time-consistency. In general, strong time-consistency implies $\V$-time-consistency. To see this, first assume strong time-consistency. If we take $Y_0= \rho_{t+1}(X)$ all other components of $Y$ to be zero, so that $\rho_t(Y\cdot \V)=\rho_t(\rho_{t+1}(X))=\rho_t(X)$ and $X-Y\cdot \V=X-\rho_{t+1}(X)\in\cA_{t+1}$. Conversely, $\rho_t(X)=\rho(Z\cdot \V+(X-Z\cdot \V))\leq \rho(Z\cdot \V)+\rho(X-Z\cdot \V)$ so if $X-Z\cdot \V\in \cA_{t+1}$ then $\rho_t(X)\leq \rho_t(Z\cdot \V)$.

We illustrate predictable time-consistency in a finite sample space $\Omega$ with a sign-changed version of Average Value at Risk.
\begin{eg}[Average Value at Risk]\label{eg:AVaR}
Consider the filtered probability space $\Omega=\{1,2,3,4\}$ with $\cF_0$ trivial, $\cF_1= \sigma(\{1,2\}, \{3,4\})$, $\cF_2=2^\Omega=\cF$ (describing a binary branching tree on two time steps).  
Define $\AVaR$, the Average Value at Risk pricing measure, by
\[ \AVaR(X) := \frac{1}{\lambda} \int_0^\lambda q_X(\alpha) \,d \alpha , \]
where $q_X(\alpha)= \inf\{ x \in \R: \P[X \le x] > \alpha \}$.
We may represent $\AVaR$ as
\[  \AVaR(X)= \sup_{\bQ \in \cQ_\lambda}\bE_\bQ[X], \qq{where} \cQ_\lambda = \left\{ \text{probability measures }\bQ \ll \P : \frac{d \bQ}{d \P} \le  \frac{1}{\lambda} \right\},\] 
noting the sign change to make $\AVaR$ a pricing measure; see section 4.4 of \cite{FS11}. We set $\lambda=\frac{1}{50}$, while the objective measure  is given by
\[ \P[\{ 1\}] = \frac{1}{100}, \qquad \P[\{ 2\}] =\P[\{ 3\}] = \frac{9}{100}, \q{and} \P[\{ 4\}] = \frac{81}{100}. \]
For notational convenience, we  represent a probability measure $\bQ$ by the quartuple of its values on atoms, $\bQ(\{i\})=:q_i$, and similarly we write $X(i) = x_i$ for a random variable $X: \Omega \to \R$.  It is easy to see that the representing set $\cQ_\lambda$ is
\[ \cQ_\lambda = \{ \bQ =(q_1,q_2,q_3,q_4): \sum_{i=1}^4 q_i = 1, \quad 0\le q_1\le \half, q_i \in [0,1] \for i=2,3,4 \}. \]
$\cQ_\lambda$ is the convex hull of 6 points:
\begin{align*}
\cQ_\lambda = \conv \{ &(\textstyle\half, \half, 0,0), \qquad  (\half,  0,\half, 0),\qquad (\half,  0,0,\half)  
\\& (0,1,0,0), \qquad (0,0,1,0) , \qquad (0,0,0,1) \}
\end{align*}
 The set of time-0 acceptable claims is
 \[ \cA_0 = \{ X=(x_1,x_2,x_3,x_4) : \sum_{i=1}^4 q_i x_i \le 0 \quad \for \bQ \in \cQ_\lambda\} .\]
Clearly, $X\in \cA_0$ if and only if $\sum_{i=1}^4 q_i x_i \le 0 $ for each of the six extreme points $\bQ$ of $\cQ_\lambda$. These six inequalities are neatly summarised as
\[  \cA_0 =\{ X=(x_1,x_2,x_3,x_4) : x_i \le 0 \for i=1,2,3,4; \q{or} x_1 \ge 0 \text{ and }  x_i\le -x_1 \for i=2,3,4\}.\]
 Define $X^0 := \ind 1 - \ind{2,3,4}$. Then it is clear that
\[\cA_0=\{ \alpha X^0 - \beta: \alpha \ge 0, \, \beta \in L^\infty_+\}.\]

The time-1 acceptance set is 
\begin{align*}
 \cA_1 &= \{ X=(x_1,x_2,x_3,x_4) :  q_1 x_1 +q_2 x_2 \le 0  \q{and} q_3 x_3 +q_4 x_4 \le 0 \quad \for \bQ \in \cQ_\lambda\}
\\&= L^\infty_-.
\end{align*}

\paragraph{Claim}$(\AVaR_0,\AVaR_1)$ is not time-consistent.
\begin{proof}
It is easy to check that we have $\AVaR_0(X^0)=0$, $\AVaR_1(X^0)=\ind{1,2}-\ind{3,4}$,  and thus
\[\textstyle\AVaR_0(\AVaR_1(X^0)) = \AVaR_0(\ind{1,2}-\ind{3,4})=1 >0=\AVaR_0(X^0).\]
\end{proof}

Now we set $\mathbf{V}=(v^0, v^1)$, where  $v^0 \equiv 1$ by convention, and $v^1=X^0+2$, so that
\[ v^1 = 3 \ind 1 +\ind{2,3,4}>0.\]
\paragraph{Claim}$\AVaR$ is predictably $\mathbf{V}$-time-consistent.
\begin{proof}
For any acceptable risk $X\in \cA_0$ we may set $X=\alpha X^0 - \beta$, where $\beta$ is some non-negative random variable taking the value 0 on the event $\{1\} $. We reserve for $X$ by holding $\alpha$ in $v^1$ and $-2\alpha$ in cash $v^0$, giving a mapping $Y_0$ from acceptable risks $X$ to initial reserving portfolios in $\V$: 

\begin{equation}\label{eq:AVaRportf0}
Y_0 = \left(\begin{array}{c} -2\alpha \\ \alpha \end{array}\right).
\end{equation}
Clearly $Y_0\cdot\V = \alpha X^0$.

Set
\begin{equation}\label{eq:AVaRportf1}
 Y_1 = \left(\begin{array}{c} -(2 \alpha+\frac{3}{2}\beta(2))\ind{1,2}-(\alpha + \beta(3)\wedge \beta(4))\ind{3,4} \\ (\alpha + \half \beta(2))\ind{1,2} \\\end{array}\right),
\end{equation}
so that 
\[ Y_1\cdot \mathbf{V} = \alpha X^0-\beta(2)\ind{2} -\beta(3)\wedge \beta(4)\ind{3,4}. \]
Now, we have $\AVaR_0(X-Y_0 \cdot \V) \le 0$, $\AVaR_1(X-Y_1 \cdot \V) \le 0$ and $\AVaR_0(Y_0\cdot \V)=\AVaR_0(X)$ and $\AVaR_1(Y_1\cdot \V)= \rho_1(X)$ as required.
Thus $\AVaR$ is predictably $\mathbf{V}$-time-consistent.

\end{proof}

\end{eg}

\subsection{Predictable representability}
\label{sec:representability}
Given any cone $\cD $ in $\ul L^\infty$ and our vector $\mathbf{V}$ of num\'eraires, we define the collection of portfolios attaining $\cD $ to be
\[ \cD (\mathbf{V}) = \{ Y \in \ul{L}^\infty(\Omega,\cF,\P;\R^{d+1}): Y\cdot \mathbf{V}\in \cD  \}.\]
The set of time-$t$ acceptable portfolios that are $\cF_{t+1}$-measurable is $ K_t(\ul{\cAn}, \mathbf{V}) := \ul{\cAn}(\mathbf{V}) \cap \ul{\cL}^\infty_{t+1}(\R^{d+1})$.

\begin{defn}\label{defn:predrep}The cone $\ul{\cAn}(\mathbf{V})$ is \emph{predictably decomposable} if
\[ \ul{\cAn} (\mathbf{V}) = \ol{\oplus_{t=0}^{T-1} K_t(\ul{\cAn}, \mathbf{V})},\]
where the closure is taken in the weak$^*$-topology. In this case, we say that the cone $\ul{\cAn}$ is \emph{predictably represented} by $\mathbf{V}$.
\end{defn}

\begin{eg}[continues=eg:AVaR][Average Value at Risk] We return to the setting of \Cref{eg:AVaR}. 
\paragraph{Claim}The acceptance set $\cA_0$ is not predictably represented by $1$.
\begin{proof}
 We note that
 \begin{align*}
 K_0(\cA_0,1) &= \{ X \in L^\infty(\cF_1): X \in \cA_0\} = L^\infty_-(\cF_1)
 \\K_1(\cA_0,1)&=\cA_1= L^\infty_-
 \end{align*}

If $\cA_0$ is to be predictably represented  by $1$, we must have that $\cA_0= K_0(\cA_0,1)+K_1(\cA_0,1) =L^\infty_-$; however $\cA_0$ contains $X^0$ which is not in $L^\infty_-$.
\end{proof}
Now set $\mathbf{V}=(1, 3 \ind 1 +\ind{2,3,4})$ as before.
\paragraph{Claim}The set $\cA_0$  is predictably represented by $\mathbf{V}$.
\begin{proof}
 For any $X \in \cA_0$ we may write $X=\alpha X^0 - \beta$, for $\alpha \ge 0$ and $\beta \in L^\infty_+$. Defining $\pi_0 =  Y_0$ and $\pi_1 = Y_1 - Y_0$ for $Y_0$, $Y_1$ as in \cref{eq:AVaRportf0,eq:AVaRportf1}, we have that $X \le \pi_0 \cdot \V + \pi_1 \cdot \V \in K_0(\cA_0, \mathbf{V}) \oplus K_1(\cA_0, \mathbf{V})$. Any non-positive random variable is in any of the $K_t(\cA_0, \mathbf{V})$ for $t=0,1$, so $X$ is in the sum, proving that $\cA_0 \subseteq K_0(\cA_0, \mathbf{V}) \oplus K_1(\cA_0, \mathbf{V})$. The reverse inclusion is clear.
 \end{proof}
\end{eg}

\subsection{Stability properties}

We recall Delbaen's m-stability condition, on a standard stochastic basis $(\Omega,\cF,(\cF_t)_t,\P)$:
\begin{defn}[Delbaen \cite{D06}] A set of probability measures $\cS \subset L^1(\Omega,\cF,\P)$ is \emph{m-stable} if for elements $\bQ^Z \in \cS$ and  $\P \sim\bQ^W \in \cS$, with associated density martingales $Z_t = \condE{\frac{d\bQ^Z}{d\P}}{\cF_t}$ and $W_t = \condE{\frac{d\bQ^W}{d\P}}{\cF_t}$, and for each stopping time $\tau$, the martingale $L$ defined as
\[ L_t = \begin{cases}Z_t &\for t\le \tau \\  \frac{Z_\tau}{W_\tau}W_t &\for t \ge \tau \end{cases}\]
defines an element in $\cS$.
\end{defn}

Note that a set $\cS$ is m-stable if, whenever $\tau$ is a stopping time, and $Z, W \in \cS$ are such that  $Z_\tau = \alpha W_\tau$, then $\alpha W \in \cS$.  Just take $\alpha =  \frac{Z_\tau}{W_\tau}$, and then $L = \alpha W$ in the above definition. We now define a vector-valued generalisation of m-stability, for a subset $\uD  \subset \ul{\cL}^{1}_+(\R^{d+1})$. 

\begin{defn}\label{defn:stable}The subset $\uD  \subset \ul{\cL}^{1}_+(\R^{d+1})$ is \emph{predictably m-stable} if, whenever $\tau \le T$ is a stopping time, and whenever $Z, W \in \uD $ with 
\begin{equation}\label{pred}
\condE{Z}{\ul{\cF}_\tau}=\alpha \condE{W}{\ul{\cF}_\tau} ,
\end{equation}
for some scalar $\alpha$, then $\alpha W$ is also in $\uD $.
\end{defn}
Note that (\ref{pred}) implies that $\alpha$  is $\ul{\cF}_\tau$-measurable and non-negative.
\begin{rmk}\label{time}
If $(\cG_s)_{s=t,\ldots ,S}$ is a filtration with $\cG_u\subset \cF_u$ for each $u$, and $\uD$ is predictably $m$-stable with respect to $(\cF)$ then it is also the case that $\uD$ is predictably $m$-stable with respect to $(\cG)$.
\end{rmk}
\begin{defn}
The cone $\cD \subset \ul L^1_+$ is said to be \emph{predictably $\mathbf{V}$-m-stable} if $\cD  \mathbf{V} = \{Y \mathbf{V}: Y \in \cD  \}$ is predictably m-stable.
\end{defn}

\begin{rmk}\label{rmk:checkstable}
In the case $d=0$, we have $\mathbf{V}\equiv 1$ and so the requirement that a set of Radon-Nikodym derivatives $\cD  \subset \ul L^1_+$ is $1$-m-stable is precisely the requirement that $\cD $ is m-stable. 
\end{rmk}

Every random vector $Z$ in $\cA_0(\mathbf{V})^*$ can be written as a multiple of $\mathbf{V}$, that is, $Z=\wt Z \mathbf{V}$ with $\wt Z \in \cA_0^*$. 

\begin{lem}\label{lem:sackV} Suppose that $\mathbf{V}$ is a collection of $d+1$ num\'eraires, and $\cD $ is a convex cone in $L^\infty$. Then \[ \cD (\mathbf{V})^*=\cD ^*\mathbf{V}.\]
\end{lem}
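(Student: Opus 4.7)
The plan is to prove the two inclusions separately, with the easy direction being $\cD^*\mathbf V\subseteq\cD(\mathbf V)^*$ and the substantive direction being the reverse.

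For the inclusion $\cD^*\mathbf V\subseteq\cD(\mathbf V)^*$, I would take $\tilde Z\in\cD^*$ and any $Y\in\cD(\mathbf V)$. Then for the vector $\tilde Z\mathbf V\in L^1(\R^{d+1})$ (which is genuinely in $L^1$ since each $v^i\in L^\infty_{++}$),
\[
\E[(\tilde Z\mathbf V)\cdot Y] = \E[\tilde Z(Y\cdot\mathbf V)]\le 0,
\]
because $Y\cdot\mathbf V\in\cD$ and $\tilde Z\in\cD^*$.

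For the harder inclusion $\cD(\mathbf V)^*\subseteq\cD^*\mathbf V$, fix $Z=(Z^0,\ldots,Z^d)\in\cD(\mathbf V)^*$. The key step is to show that $Z$ is colinear with $\mathbf V$, using test vectors annihilating $\mathbf V$. For each $j\in\{1,\ldots,d\}$ and each $f\in L^\infty$, set $Y^{(j,f)}$ to have $j$-th component $-f v^0$, zeroth component $f v^j$, and all other components zero. Because $\mathbf V$ consists of num\'eraires, $fv^j,fv^0\in L^\infty$, so $Y^{(j,f)}\in\cL^\infty(\R^{d+1})$; and by construction $Y^{(j,f)}\cdot\mathbf V=0\in\cD$ (since $\cD$ is a convex cone), so both $Y^{(j,f)}$ and $-Y^{(j,f)}$ lie in $\cD(\mathbf V)$. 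Applying $Z$ to both yields
\[
\E\bigl[f(v^j Z^0-v^0 Z^j)\bigr]=0\qquad\text{for every }f\in L^\infty,
\]
which forces $v^j Z^0=v^0 Z^j$ a.s. Setting $\tilde Z:=Z^0/v^0\in L^1$ (this is well-defined and integrable because $1/v^0\in L^\infty_{++}$), we obtain $Z^j=\tilde Z v^j$ for each $j$, i.e.\ $Z=\tilde Z\mathbf V$.

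It remains to verify $\tilde Z\in\cD^*$. Given any $X\in\cD$, set $Y=(X/v^0,0,\ldots,0)\in\cL^\infty(\R^{d+1})$; then $Y\cdot\mathbf V=X\in\cD$, so $Y\in\cD(\mathbf V)$, and
\[
\E[\tilde Z X] = \E[(Z^0/v^0)\cdot v^0\cdot (X/v^0)] = \E[Z\cdot Y]\le 0.
\]
Hence $\tilde Z\in\cD^*$ and $Z=\tilde Z\mathbf V\in\cD^*\mathbf V$, completing the proof.

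The only delicate point is ensuring the test vectors live in $L^\infty(\R^{d+1})$ and that the resulting scalar $\tilde Z$ is in $L^1$; both rely crucially on the num\'eraire property $v^i,1/v^i\in L^\infty_{++}$, so without the num\'eraire hypothesis the colinearity argument would break. Everything else is an elementary application of the convex-cone structure of $\cD$ (which contains $0$, hence admits the sign-reversal trick on the kernel of the map $Y\mapsto Y\cdot\mathbf V$).
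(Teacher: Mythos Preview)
Your proof is correct and follows essentially the same approach as the paper's: both use the kernel vectors $\alpha(v^i e_j - v^j e_i)$ (you take $i=0$ directly) to force colinearity $Z=\tilde Z\mathbf V$, and then check $\tilde Z\in\cD^*$ by pulling back an arbitrary $C\in\cD$ to $\cD(\mathbf V)$. The only cosmetic differences are that the paper silently uses the standing convention $v^0\equiv 1$ (so $\tilde Z=Z^0$ without division), while you keep the $1/v^0$ explicit, and that you specify the preimage $Y=(X/v^0,0,\ldots,0)$ where the paper just says ``take $X$ with $X\cdot\mathbf V=C$''.
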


\begin{proof}
First take $Z\in \cD ^*$. For any $X\in \cD (\mathbf{V})$ we have $\bE[ZV\cdot X] \le 0$ and so $ZV \in \cD (\mathbf{V})^*$, thus $\cD (\mathbf{V})^* \supseteq \cD ^*\mathbf{V}$. 

For the reverse inclusion, recall that $e_i$ denotes the $i$th canonical basis vector in $\R^{d+1}$. First, since $\mathbf{V}\cdot \alpha(v^ie_j-v^je_i)=0$, we have
\[ \alpha(v^ie_j-v^je_i) \in \cD (\mathbf{V}) \qquad \forall \alpha \in L^\infty.\]
Take $Z \in \cD (\mathbf{V})^*$. Now, for any $i,j \in \{1,\dots,d\}$, $\alpha \in L^\infty$, we have
\[ \bE[Z \cdot  \alpha(v^ie_j-v^je_i)] \le 0.\]
Reversing $i$ and $j$ in the above, we may write $\bE[Z \cdot  \alpha(v^ie_j-v^je_i)] = 0$, and allowing first $\alpha= \ind{Z \cdot  (v^ie_j-v^je_i) >0}$ then $\alpha= \ind{Z \cdot  (v^ie_j-v^je_i) <0}$, we see that in fact,
\[ Z\cdot  (v^ie_j-v^je_i) = 0 \qquad \text{a.s. for any }i,j,\]
and so, taking $i=0$ we have $ Z^j=Z^0v^j $ a.s. for each $j$, thus any $Z \in \cD (\mathbf{V})^*$ must be of the form $Z^0\mathbf{V}$ for some $Z^0 \in L^1$. Now, given $C\in \cD $, take $X$ such that $X\cdot \mathbf{V}=C$ (which implies that $X\in \cD (\mathbf{V})$), then 
\[0 \ge \bE[W\mathbf{V}\cdot X]=\bE[WC],\]
and since $C$ is arbitrary, it follows that $W\in \cD ^*$. Hence $\cD (\mathbf{V})^*\subseteq \cD ^*\mathbf{V}$.
\end{proof}

\begin{rmk}\label{rmk:predstabchecker}
In light of \Cref{lem:sackV}, we may check that $\cA_0(\mathbf{V})^* \equiv \cA_0^* \mathbf{V}$ is predictably stable in the following way. We first associate to each $Z\in \cA_0^*$ the probability measure $\bQ^Z$, defined through its Radon-Nikodym derivative \[\frac{d \bQ^Z}{d \P} = \frac{Z}{\bE[Z]}.\] 
We note that if ${Z},{W} \in \cA_0(\mathbf{V})^*$, then we may find $ \wt Z, \wt W \in \ul\cA_0^*$ such that ${Z}=\wt Z \mathbf{V}$ and ${W}=\wt W \mathbf{V}$. The assumption that $v^0\equiv 1$ gives the equivalence of the condition $\condE{{Z}}{\ul{\cF}_\tau}=m \condE{{W}}{\ul{\cF}_\tau}$ with the condition
\begin{equation}
\bE_{\bQ^{\wt Z}}[\mathbf{V}|\ul{\cF}_\tau] =\bE_{\bQ^{\wt W}}[\mathbf{V}|\ul{\cF}_\tau].  \label{eq:wtreduction}
\end{equation}
The set $\cA_0(\mathbf{V})^* $ is predictably $\mathbf{V}$-m-stable if, for any stopping time $\tau \le T$, whenever $\wt Z, \wt W \in \cA_0^*$ are such that \eqref{eq:wtreduction} holds, then \[ \frac{\condE{\wt{Z}}{\ul{\cF}_\tau}}{\condE{\wt{W}}{\ul{\cF}_\tau}}{W}\in \cA_0^*(\mathbf{V}).\] 

\end{rmk}

\begin{eg}[continues=eg:AVaR] \label{eg:stabAVaR}We return to the setting of \Cref{eg:AVaR}.

\paragraph{Claim}$\cA_0^*$ is not m-stable.
\begin{proof}
 Define measures $\bQ^1=(\half, \half, 0,0) \in \cQ_\lambda$ and $\bQ^2= (\half,  0,\half, 0) \in \cQ_\lambda$. We form the time-1 pasting of the measures  $\bQ^1$ and $\bQ^2$ by setting
\[ \frac{d\wt\bQ}{d\P}=\frac{\condE{\frac{d\bQ^1}{d \P}}{\cF_1}}{\condE{\frac{d\bQ^2}{d \P}}{\cF_1}} \frac{d\bQ^2}{d \P}  \]
so that $\wt \bQ = (1,0,0,0)$.
Now $ \wt q_1=1>\half$ which shows $\wt \bQ \not \in \cQ_\lambda$, and so $\cQ_\lambda$ is not m-stable.
\end{proof}

Now set $\mathbf{V}=(1, 3 \ind 1 +\ind{2,3,4})$ as before.

\paragraph{Claim}$\cA_0^*$ is $\mathbf{V}$-m-stable.
\begin{proof}
First, consider the pasting $ \wt \bQ = \bQ \oplus_\tau \bQ'$ of measures $\bQ$ and $\bQ'$ in $\cQ_\lambda$ at the  stopping time $\tau$:
\begin{align*}
\frac{d\wt\bQ}{d\P} &= \frac{\condE{\frac{d\bQ}{d \P}}{\cF_\tau}}{\condE{\frac{d\bQ'}{d \P}}{\cF_\tau}} \frac{d\bQ'}{d \P}
\\& = \frac{d\bQ'}{d \P} \ind{\tau = 0} + \frac{\condE{\frac{d\bQ}{d \P}}{\cF_1}}{\condE{\frac{d\bQ'}{d \P}}{\cF_1}} \frac{d\bQ'}{d \P}\ind{\tau = 1} + \frac{d\bQ}{d \P} \ind{\tau = 2}.
\end{align*}
 By  \Cref{rmk:predstabchecker}, we fix $\wt Z$ and $\wt Z'$ in $\cA_0^*$ with associated probability measures $\bQ$ and $\bQ'$ that additionally satisfy 
 \[ \bE_{\bQ}[v^1|\ul{\cF}_\tau] = \bE_{\bQ'}[v^1|\ul{\cF}_\tau],\]
 and we aim to show that $\wt \bQ \in \cQ_\lambda$. On the event $\{\tau = 0\}$ (respectively $\{\tau = 2\}$), we have that $\wt\bQ = \bQ'$ (respectively   $\wt\bQ = \bQ$) and the bound $\wt \bQ(1)\le \half$ is trivially satisfied.
The event $\{\tau = 1\}$ is one of $\emptyset$, $\{1,2\}$, $\{3,4\}$, $\Omega$.
Writing $\bQ = (q_i)_{i=1}^4$, for $\omega \in \{1,2,3,4\}$,
\[ \bE_{\bQ}[v^1|\ul{\cF}_1](\omega) = \frac{3q_1+q_2}{q_1+q_2}\ind{q_1+q_2>0}\ind{1,2}(\omega)+\ind{q_3+q_4>0}\ind{3,4}(\omega) \]
We may paste measures $\bQ$ and $\bQ'$ that satisfy
\begin{equation*}
\frac{3q_1+q_2}{q_1+q_2}\ind{q_1+q_2>0}\ind{1,2} +\ind{q_3+q_4>0}\ind{3,4}= \frac{3q_1'+q_2'}{q_1'+q_2'}\ind{q_1'+q_2'>0}\ind{1,2} +\ind{q_3'+q_4'>0}\ind{3,4} 
\end{equation*}
on $\{\tau = 1\}$, which simplifies  to the requirement that
 \begin{equation}
\label{eq:simpcond}
\frac{q_1}{q_2} \ind{q_1+q_2>0} \one{\{1,2\}\cap\{\tau = 1\}} +\ind{q_3+q_4>0}\one{\{3,4\}\cap\{\tau = 1\}} = \frac{q_1'}{q_2'} \ind{q_1'+q_2'>0}\one{\{1,2\}\cap\{\tau = 1\}} +\ind{q_3'+q_4'>0}\one{\{3,4\}\cap\{\tau = 1\}} .
\end{equation}
On $\{\tau = 1\}\supset \{ 1\}$, the pasting $\wt\bQ$ weights $\{1\}$ as 
\[ \bQ\oplus_{\tau} \bQ' (\{1\})= (q_1+q_2) \frac{q_1'}{q_1'+q_2'}\ind{q_1'+q_2'>0} = (q_1+q_2) \frac{\frac{q_1'}{q_2'} }{\frac{q_1'}{q_2'}+1}\ind{q_1'+q_2'>0} \stackrel{\text{\eqref{eq:simpcond}}}{=} q_1\ind{q_1+q_2>0}\]
The other cases are easy to check. Thus $\bQ\oplus_{\tau} \bQ' \in \cA_0^*$, and  $\cA_0^*$ is $\mathbf{V}$-m-stable.
\end{proof}

\end{eg}

\subsection{Main result}
We fix   num\'eraires $\mathbf{V}$,  a coherent risk measure  $\rho=(\rho_t)_t$ with convex representing set of probability measures $\cQ$, and take $\ul{\cA}_t$ to be the acceptance set of $\rho_t$ for $t \in \bT$. The main result is

\begin{theorem}\label{thm:main}
The following are equivalent:
\begin{enumerate}[(i)]
\item
$(\ul\rho_t)_{t\in \bT}$ is predictably $\mathbf{V}$-time-consistent;
\item
$\ul{\cAn}$ is predictably represented by  $\mathbf{V}$;
\item
 $\ul{\cAn}(\mathbf{V})^*  $ is predictably m-stable.
\end{enumerate}
\end{theorem}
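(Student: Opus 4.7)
My plan is to prove the cycle of implications via three distinct arguments: (ii) $\Leftrightarrow$ (iii) via bipolar duality, (i) $\Rightarrow$ (ii) by iterating $\V$-time-consistency, and (ii) $\Rightarrow$ (i) by reading the reserving portfolio off the first summand of a representation. Because $\cA_0(\V)$ is a weak$^*$-closed convex cone in $\cL^\infty(\R^{d+1})$, the bipolar theorem in the dual pair $(\cL^\infty(\R^{d+1}), \cL^1(\R^{d+1}))$ is the backbone of the duality argument, converting cone identities into identities of dual cones.

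For (ii) $\Leftrightarrow$ (iii), I would first compute $K_t(\cA, \V)^*$ explicitly. Because every $Y \in K_t = \cA_t(\V) \cap \cL^\infty_{t+1}(\R^{d+1})$ is $\cF_{t+1}$-measurable, the tower property gives that $Z \in K_t^*$ iff $\condE{Z}{\cF_{t+1}}$ lies in the dual cone of the $\cF_{t+1}$-measurable slice $\cA_t(\V) \cap \cL^\infty_{t+1}$. This motivates defining the predictable pre-image $\cM_t(\cA_0(\V)^*)$ as the set of random vectors whose $\cF_{t+1}$-conditional expectation coincides with that of some element of $\cA_0(\V)^*$, and the central lemma to prove is the identity $K_t^* = \cM_t(\cA_0(\V)^*)$. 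By bipolarity, (ii) then reads as $\cA_0(\V)^* = \bigcap_{t=0}^{T-1} \cM_t(\cA_0(\V)^*)$: for any $W \in \cA_0(\V)^*$ and any random vector with matching conditional expectation at time $t+1$, the vector again lies in $\cA_0(\V)^*$. Specialising to $Z = \alpha W$ and stratifying over $\{\tau = t\}$ for a general stopping time (as permitted by Remark \ref{time}) gives exactly predictable $\V$-m-stability.

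For (i) $\Rightarrow$ (ii), take $Y \in \cA_0(\V)$ and $\epsilon > 0$. Predictable $\V$-time-consistency at time $0$ supplies $Y^{(1)} \in \cL^\infty_1(\R^{d+1})$ with $(Y - Y^{(1)}) \cdot \V \in \cA_1$ and $\rho_0(Y^{(1)} \cdot \V) \le \epsilon/T$. Iterating at each time $t$ on the residual $(Y - \sum_{s=1}^{t} Y^{(s)}) \cdot \V \in \cA_t$ produces $Y^{(t+1)} \in \cL^\infty_{t+1}(\R^{d+1})$ with $\rho_t(Y^{(t+1)} \cdot \V) \le \epsilon/T$; after $T$ iterations the non-positive final residual (since $\cA_T = L^\infty_-$) is absorbed into $Y^{(T)}$ by monotonicity of $\rho_{T-1}$. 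Subtracting $(\epsilon/T) e_0$ from each $Y^{(t+1)}$ places it in $K_t$, so $Y - \epsilon e_0 \in \oplus_t K_t$ and hence $Y$ lies in the norm-closure, a fortiori the weak$^*$-closure, of $\oplus_t K_t$. For the converse (ii) $\Rightarrow$ (i), first verify that the representation localises to $\cA_s(\V) = \overline{\oplus_{t=s}^{T-1} K_t}$ for every $s$. For $X \in L^\infty$, approximate the portfolio $(X - \rho_t(X)) e_0 \in \cA_t(\V)$ by $\sum_{s=t}^{T-1} \pi_s$ with $\pi_s \in K_s$, and take $Y := \pi_t \in \cL^\infty_{t+1}(\R^{d+1})$. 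The tail $\sum_{s=t+1}^{T-1} \pi_s \in \cA_{t+1}(\V)$ certifies $X - Y \cdot \V - \rho_t(X) \in \cA_{t+1}$ up to weak$^*$-error, while $\rho_t(Y \cdot \V) \le \rho_t(X) + o(1)$ follows from $\pi_t \in K_t$ and cash invariance; the reverse inequality in the essinf is immediate from monotonicity and subadditivity.

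The main obstacle throughout is the careful management of the weak$^*$-closure in Definition \ref{defn:predrep}: elements of $\cA_0(\V)$ need not be literal finite sums from $\oplus_t K_t$ but only their weak$^*$-limits, so extracting a genuine $\cF_{t+1}$-measurable increment with exact $\cA_{t+1}$ membership requires either a norm-approximation via cash padding (exploiting the positivity of the cash num\'eraire $v^0 \equiv 1$) or a Fatou-type limiting argument. A parallel difficulty is the non-obvious inclusion $\cM_t(\cA_0(\V)^*) \subseteq K_t^*$ in the pre-image identity, whose proof hinges on dualising the $\cF_{t+1}$-measurable slice $\cA_t(\V) \cap \cL^\infty_{t+1}$ correctly and reconstructing a bona fide element of $\cA_0(\V)^*$ from its $\cF_{t+1}$-conditional expectation.
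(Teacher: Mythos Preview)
Your architecture for (ii) $\Leftrightarrow$ (iii) is exactly the paper's: the identity $K_t(\cA,\V)^* = \cM_t(\cA_0(\V)^*)$ is precisely \Cref{thm:crucialClaim}, and bipolarity then converts predictable representability into $\cA_0(\V)^* = \bigcap_t \overline{\conv}\,\cM_t(\cA_0(\V)^*)$. You should be aware, though, that the equivalence of this last identity with predictable m-stability is not a one-line specialisation; it is the content of \Cref{lem:eqstab}, \Cref{lem:A1} and \Cref{lem:A2}, which require some care (in particular, $\cM_t$ is not a priori closed or convex, so one direction uses $\overline{\conv}\,\cM_t$ and the other uses $\cM_t$ itself). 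Your $\epsilon/T$-iteration for (i) $\Rightarrow$ (ii) is a legitimate variant of the paper's acceptance-set argument.

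The genuine gap is in your (ii) $\Rightarrow$ (i). First, the localisation $\cA_s(\V) = \overline{\oplus_{t=s}^{T-1} K_t}$ is \emph{not} a formal consequence of (ii) alone, and you give no indication of how to verify it. The paper obtains it only by the detour (ii) $\Rightarrow$ (iii), then invoking \Cref{time} to transport (iii) to the shifted filtration $(\cF_s,\dots,\cF_T)$, and finally applying (iii) $\Rightarrow$ (ii) there. Second, even granting localisation, your plan to ``take $Y := \pi_t$'' from a weak$^*$-approximation $\sum_{s\ge t}\pi_s$ does not yield $X - Y\cdot\V - \rho_t(X) \in \cA_{t+1}$: the tail lies in $\cA_{t+1}(\V)$, but the approximation error is only weak$^*$-small, and neither cash padding nor a Fatou argument closes that gap. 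The paper avoids this entirely by working at the level of acceptance sets: it proves directly (independently of (i)) that the acceptance set $\cB_t$ of $\ce_t(X) := \essinf\{\rho_t(Y\cdot\V): X-Y\cdot\V\in\cA_{t+1}\}$ equals $\overline{K_t(\V)\cdot\V + \cA_{t+1}}$, using the downward directedness of the feasible set to produce a sequence $X_n = X + \ce_t(X) - a_n$ converging to $X$ in \emph{norm}. With localisation in hand, $\cA_t = \overline{K_t(\V)\cdot\V + \cA_{t+1}} = \cB_t$, and equality of acceptance sets gives $\rho_t = \ce_t$ without ever extracting an individual $Y$ from a weak$^*$ limit.
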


The proof will be given in \Cref{sec:pomr}.

We now highlight waypoints in the proof of \Cref{thm:main}.

Thinking of the conditional expectation $\bE[\cdot|\ul\cF_{t+1}]$ as a projection from $\ul{\cL}^1(\R^{d+1})$ to  $\ul{\cL}^1_{t+1}(\R^{d+1})$, we define the \emph{predictable pre-image} of $\uD$ at time $t$ by first projecting $\uD$ to $\ul{\cL}^1_{t+1}(\R^{d+1})$, then taking the $\cF_t$-cone, and finally taking the pre-image under the projection $\bE[\cdot|\ul\cF_{t+1}]$. The $\cF_t$-cone of a set $E$ is 
\[\cone_{\cF_t}E = \{ \alpha w_1 + \beta w_2 : \alpha,\beta \in L^\infty_+(\cF_t), \, w_1,w_2 \in E\}.\]
More concisely:
\begin{defn}
For $\uD \subset \ul{\cL}^1_+(\R^{d+1})$, we define for each time $t$ the \emph{predictable pre-image} of $\cD$ by
\begin{align}\label{eq:MtD }
\cM_t(\uD ) := \{ Z \in \ul{\cL}^1(\R^{d+1}):  &\exists \alpha_t \in \ul{L}^0_{t,+},  \exists Z'\in \uD  \nonumber
\\&\text{ such that } \alpha_t Z' \in \ul{\cL}^1(\R^{d+1}) \text{ and } \condE{Z}{\ul{\cF}_{t+1}}= \alpha_t \condE{Z'}{\ul{\cF}_{t+1}} \}.
\end{align}
\end{defn}

The predictable pre-image of  a set $\uD \subset \ul{\cL}^1_+(\R^{d+1})$, is key to understanding predictably stable convex cones, as shown in the following three lemmas.
The first gives  an alternative characterisation of stability:
\begin{lem}\label{lem:eqstab} Let $\uD  \subset \ul{\cL}^{1}_+(\R^{d+1})$. The following are equivalent:
\begin{enumerate}[(i)]
\item
for each $t\in \{0,1,\dots, T\}$, whenever $Y,W \in \uD $ are such that there exists $Z\in \uD $, a set $F\in \ul{\cF}_t$, positive random variables $\alpha, \beta \in \ul{\cL}^0(\ul{\cF}_t)$ with $\alpha Y, \beta W \in \ul{\cL}^1(\R^{d+1})$ and 
\[ X:= \one{F} \alpha Y + \one{F^c} \beta W \q{satisfies}\condE{X}{\ul{\cF}_t}=\condE{Z}{\ul{\cF}_t},\]
 then $X$  is also a member of $\uD $;
\item
$\uD $ is predictably stable, that is, for each stopping time $\tau \le T$, whenever $Z,W\in \uD $ are such that 
\[ \condE{Z}{\ul{\cF}_\tau}=m \condE{W}{\ul{\cF}_\tau} ,\]
then $mW$ is also a member of $\uD $.
\end{enumerate}
\end{lem}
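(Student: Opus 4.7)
The plan is to prove (i) $\Leftrightarrow$ (ii) as two converse implications, each by a pasting construction that translates between the two formulations. Condition (i) is a \emph{two-sided} scalar pasting at a deterministic time $t$, while (ii) is a \emph{one-sided} scalar rescaling at a stopping time $\tau$; the translation hinges on the level-set decomposition of $\tau$ in one direction, and on exploiting the scalar proportionalities implied by the hypothesis of (i) on $F$ and $F^c$ in the other.

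For (i) $\Rightarrow$ (ii), let $\tau$ take values $s_1 < \cdots < s_k$ and set $F_j = \{\tau = s_j\} \in \ul\cF_{s_j}$. I would build $mW$ layer by layer: set $X^{(0)} := Z$ and recursively $X^{(j)} := \one{F_j}mW + \one{F_j^c}X^{(j-1)}$, so that $X^{(k)} = mW$. A short induction gives $X^{(j)} \in \uD$. On $F_j$ we have $X^{(j-1)} = Z$ (since the $F_i$ are pairwise disjoint), $m$ is $\ul\cF_{s_j}$-measurable (as $\tau = s_j$ there), and $\condE{Z}{\ul\cF_{s_j}} = m\,\condE{W}{\ul\cF_{s_j}}$ from the hypothesis, so
\[ \condE{X^{(j)} - X^{(j-1)}}{\ul\cF_{s_j}} = \one{F_j}\bigl(m\,\condE{W}{\ul\cF_{s_j}} - \condE{Z}{\ul\cF_{s_j}}\bigr) = 0; \]
then (i) applied at $t = s_j$ with data $(\alpha,Y,\beta,W) = (m,W,1,X^{(j-1)})$ and matching witness $X^{(j-1)} \in \uD$ delivers $X^{(j)} \in \uD$.

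For (ii) $\Rightarrow$ (i), restricting $\condE{X}{\ul\cF_t} = \condE{Z}{\ul\cF_t}$ to $F$ and to $F^c$ gives
\[ \condE{Z}{\ul\cF_t} = \alpha\,\condE{Y}{\ul\cF_t} \text{ on } F, \qquad \condE{Z}{\ul\cF_t} = \beta\,\condE{W}{\ul\cF_t} \text{ on } F^c, \]
each of which is a scalar-proportionality relation of the type demanded by (ii). I would then apply (ii) twice, using stopping times $\tau_1 = t\one{F^c} + T\one{F}$ and $\tau_2 = t\one{F} + T\one{F^c}$. The first application, to $(Z,W)$ with a scalar $m_1$ chosen to agree with $\beta$ on $F^c$ and with the appropriate $\ul\cF$-measurable factor on $F$, produces the intermediate element $\widetilde X := m_1 W = \one{F} Z + \one{F^c}\beta W \in \uD$. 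The second application, to $(\widetilde X, Y)$ with a scalar $m_2$ agreeing with $\alpha$ on $F$, yields $X = m_2 Y = \one{F}\alpha Y + \one{F^c}\beta W \in \uD$.

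The step I expect to be most delicate is the (ii) $\Rightarrow$ (i) direction when $d \ge 1$: one must verify that $m_1$ and $m_2$ can genuinely be realised as \emph{scalar} random variables satisfying the conditional-expectation identity on the portion of $\tau_j$ that sees the full terminal $\sigma$-algebra. In the scalar case ($d = 0$) this is immediate, while in the paper's principal application $\uD = \cA_0^*\V$ it follows from the fact that every element of $\uD$ is a scalar multiple of the fixed num\'eraire vector $\V$. The remaining issues --- null-set bookkeeping where denominators vanish, and integrability of pasted elements --- are routine and follow from $\uD \subset \ul{\cL}^1(\R^{d+1})$ together with the standing hypothesis $\alpha Y,\beta W \in \ul{\cL}^1(\R^{d+1})$.
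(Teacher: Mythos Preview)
For (ii)$\Rightarrow$(i) you follow exactly the paper's route: apply (ii) twice with the complementary stopping times $T\one{F}+t\one{F^c}$ and $t\one{F}+T\one{F^c}$ to pass from $Z$ first to $\one{F}Z+\one{F^c}\beta W$ and then to $\one{F}\alpha Y+\one{F^c}\beta W$. Your flag on the delicate step is accurate and in fact mirrors the paper's own handling: on the set where the stopping time equals $T$, the hypothesis of (ii) forces the two vectors involved to be \emph{pointwise} scalar multiples of one another, and both you and the paper simply write the scalar as a ratio of a single fixed coordinate without checking that this ratio is coordinate-independent. As you note, this is harmless when $d=0$ or when $\uD=\cA_0^*\V$, so nothing is lost for the applications in the paper.

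For (i)$\Rightarrow$(ii) your forward level-set iteration is a genuinely different organisation of the argument. The paper runs a backward induction on the lower bound $k$ of the stopping time: given $\widetilde\tau\ge k$ it forms $\tau^*:=\widetilde\tau\,\one{\{\widetilde\tau\ge k+1\}}+T\,\one{\{\widetilde\tau=k\}}\ge k+1$, invokes the inductive hypothesis (namely (ii) at $\tau^*$) to produce an intermediate element of $\uD$, and only then applies (i) at level $k$. Your approach---peel off one level set $\{\tau=s_j\}$ at a time and apply (i) directly at $t=s_j$ with $\beta=1$ and witness $X^{(j-1)}$---is more elementary and has the pleasant side effect that it never forms component ratios or re-invokes (ii); consequently the coordinate-parallelism subtlety, which does resurface in the paper's inductive step (since $\tau^*=T$ on $\{\widetilde\tau=k\}$), is entirely absent from your direction.
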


\begin{proof}
\emph{(ii)$\implies$(i):} We suppose that (ii) holds, and fix $t \in \bT$. We aim for a triple of random variables $Y,W,Z$ in $\uD $, together with an $F \in \ul{\cF}_t$, and $\alpha, \beta$ as required in condition (i), such that we can apply (ii) twice to show that the resulting $X$ defined in condition (i) is a member of $\uD $.

First, let $\tau=T \one{F} + t \one{F^c}$ and suppose $Z, W\in \uD $ satisfy $ \condE{Z^i}{\ul{\cF}_\tau}=m \condE{W^i}{\ul{\cF}_\tau}$ for all $i$. By (ii), we have $\wt{X}:= m W \in \uD $. Writing 
\[ \beta:=\frac{\condE{Z^i}{\ul{\cF}_t}}{\condE{W^i}{\ul{\cF}_t}}\ind{\condE{W^i}{\ul{\cF}_t}> 0}, \]
we may express $\wt{X} = Z \one{F}+ \beta W\one{F^c}$.

Second, let $\wt \tau=t \one{F} + T \one{F^c}$ and suppose $Y\in \uD $ satisfies $ \condE{\wt X^i}{\ul{\cF}_\tau}=\wt m \condE{Y^i}{\ul{\cF}_\tau}$ for all $i$. By (ii), we have ${X}:= \wt m Y \in \uD $. Writing 
\[ \alpha:=\frac{\condE{\wt X^i}{\ul{\cF}_t}}{\condE{Y^i}{\ul{\cF}_t}}\ind{\condE{Y^i}{\ul{\cF}_t}> 0},\]
we may express ${X} = \alpha Y \one{F}+ \beta W\one{F^c}$.

Now, we have a $t$ fixed, $Y,W,Z \in \uD $, a set $F\in \ul{\cF}_t$, and positive r.v.s $\alpha, \beta \in \ul{\cL}^0(\ul{\cF}_t)$. We have already that $X\in \uD $, thus it remains to check\footnote{the integrability conditions $\alpha Y, \beta W \in \ul{\cL}^1(\R^{d+1})$ are easily verified.} that $X$ and $Z$  as defined above satisfy $\condE{X}{\ul{\cF}_t}=\condE{Z}{\ul{\cF}_t}$. 
\begin{align*}
\condE{X}{\ul{\cF}_t} &= \one{F} \condE{\alpha Y}{\ul{\cF}_t} +\one{F^c} \condE{\beta W}{\ul{\cF}_t}
\\&=\one{F} \condE{\frac{\condE{\wt X^i}{\ul{\cF}_t}}{\condE{Y^i}{\ul{\cF}_t}} \ind{\condE{Y^i}{\ul{\cF}_t}> 0}Y}{\ul{\cF}_t} 
\\&\qquad+\one{F^c} \condE{\frac{\condE{Z^i}{\ul{\cF}_t}}{\condE{W^i}{\ul{\cF}_t}} \ind{\condE{W^i}{\ul{\cF}_t}> 0}W}{\ul{\cF}_t}
\\&=\one{F} \condE{\wt X}{\ul{\cF}_t} +\one{F^c} \condE{Z}{\ul{\cF}_t}
\\&= \condE{Z}{\ul{\cF}_t},
\end{align*}which establishes statement (i).

\emph{(i)$\implies$(ii):} Say (i) holds; then (ii) holds  for when $\tau = T$ trivially. Now suppose that (ii) holds for any stopping time $\tau \ge k+1$ a.s., and proceed by backward induction on the lower bound of the stopping times. Fix an arbitrary stopping time $\wt \tau \ge k$ a.s., and define $F=\{ \wt \tau \ge k+1\}$ and the stopping time $\tau^* := \wt \tau \one F + T \one{F^c}$. Note that $\tau^* \ge k+1$, since $F^c = \{ \wt \tau = k\}$.

We shall now take $Z, W \in \uD $ that satisfy $\condE{Z^i}{\ul{\cF}_{\wt\tau}}=m \condE{W^i}{\ul{\cF}_{\wt\tau}}$ for all $i$, and aim to show that $mW$ is indeed an element of $\uD $, with the help of condition (i).

To this end, define 
\[ Y:= W \frac{\condE{Z^i}{\ul{\cF}_{\tau^*}}}{\condE{W^i}{\ul{\cF}_{\tau^*}}}\ind{\condE{W^i}{\ul{\cF}_{\tau^*}}> 0} =\one F W  \frac{\condE{Z^i}{\ul{\cF}_{\wt\tau}}}{\condE{W^i}{\ul{\cF}_{\wt\tau}}}\ind{\condE{W^i}{\ul{\cF}_{\wt\tau}}> 0} + Z \one{F^c}. \]
By the inductive hypothesis, $Y$ is in $\uD $, thanks
 to the bound $\tau^* \ge k+1$, . 

Now, we have $t=k$ fixed, $Y,W,Z \in \uD $, a set $F\in \ul{\cF}_t$, and positive random variables $\alpha\equiv 1$, $\beta :=\one{F^c} \frac{\condE{Z^i}{\ul{\cF}_{k}}}{\condE{W^i}{\ul{\cF}_{k}}}$. Define 
\begin{align*}
X&:=\one{F} \alpha Y + \one{F^c} \beta W
\\&=W\one{F}  \frac{\condE{Z^i}{\ul{\cF}_{\wt\tau}}}{\condE{W^i}{\ul{\cF}_{\wt\tau}}}\ind{\condE{W^i}{\ul{\cF}_{\wt\tau}}> 0}  + W\one{F^c} \frac{\condE{Z^i}{\ul{\cF}_{k}}}{\condE{W^i}{\ul{\cF}_{k}}} \ind{\condE{W^i}{\ul{\cF}_{k}}> 0} 
\\&=W  \frac{\condE{Z^i}{\ul{\cF}_{\wt\tau}}}{\condE{W^i}{\ul{\cF}_{\wt\tau}}}\ind{\condE{W^i}{\ul{\cF}_{\wt\tau}}> 0} .
\end{align*}
It is elementary to check that $X$ and $Z$  as defined above satisfy $\condE{X}{\ul{\cF}_k}=\condE{Z}{\ul{\cF}_k}$. Thus by (i), $X$ is an element of $\uD $, which completes the inductive step.
\end{proof}

\begin{lem}\label{lem:A1} Suppose $\uD \subset \ul{\cL}^1_+(\R^{d+1})$. If $\uD $ is a predictably stable convex cone, then
\[ \uD = \bigcap_{t=0}^{T-1} \cM_t(\uD ).\]
\end{lem}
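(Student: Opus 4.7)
The inclusion $\uD\subseteq\bigcap_{t=0}^{T-1}\cM_t(\uD)$ is immediate: any $Z\in\uD$ satisfies $\condE{Z}{\ul\cF_{t+1}}=1\cdot\condE{Z}{\ul\cF_{t+1}}$, so the witnesses $Z'=Z$ and $\alpha_t=1$ demonstrate membership in each $\cM_t(\uD)$. I focus on the reverse inclusion.

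Let $Z\in\bigcap_{t=0}^{T-1}\cM_t(\uD)$ and, for each $t\in\{0,\dots,T-1\}$, fix witnesses $\alpha_t\in \ul L^0_{t,+}$ and $Y^{(t)}\in\uD$ with $\alpha_t Y^{(t)}\in\ul{\cL}^1(\R^{d+1})$ and
\[
\condE{Z}{\ul\cF_{t+1}}=\alpha_t\condE{Y^{(t)}}{\ul\cF_{t+1}}.
\]
My plan is to construct, by forward induction on $k\in\{0,\dots,T\}$, elements $A_k\in\uD$ satisfying
\[
(\star_k)\qquad \condE{A_k}{\ul\cF_k}=\condE{Z}{\ul\cF_k};
\]
then the case $k=T$ yields $A_T=Z\in\uD$, completing the proof.

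For the base case $k=0$ (adopting the usual convention, in force in \Cref{eg:AVaR}, that $\ul\cF_0$ is trivial), $\alpha_0$ is a non-negative constant, so the cone property of $\uD$ gives $A_0:=\alpha_0 Y^{(0)}\in\uD$; taking an unconditional expectation in the defining identity at $t=0$ verifies $\bE[A_0]=\alpha_0\bE[Y^{(0)}]=\bE[Z]$, which is $(\star_0)$. For the inductive step $k\to k+1$, apply the tower property to the identity at $t=k$ to obtain $\condE{Z}{\ul\cF_k}=\alpha_k\condE{Y^{(k)}}{\ul\cF_k}$, then combine with the inductive hypothesis to get
\[
\condE{A_k}{\ul\cF_k}=\alpha_k\condE{Y^{(k)}}{\ul\cF_k}.
\]
Since $A_k$ and $Y^{(k)}$ both lie in $\uD$, predictable $m$-stability applied at the constant stopping time $\tau\equiv k$ yields $\alpha_k Y^{(k)}\in\uD$, so I set $A_{k+1}:=\alpha_k Y^{(k)}$; the identity $\condE{A_{k+1}}{\ul\cF_{k+1}}=\alpha_k\condE{Y^{(k)}}{\ul\cF_{k+1}}=\condE{Z}{\ul\cF_{k+1}}$ is then automatic, closing the induction.

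The principal subtlety is that predictable $m$-stability functions as a conditional closure-under-$\cF_k$-measurable-multiplication property which can only be invoked in the presence of a pre-existing partner $A_k\in\uD$ sharing the same $\cF_k$-conditional expectation as $Z$; the forward induction is designed precisely to supply this partner at each step, and it is the matching of the witness identity for $Z\in\cM_k(\uD)$ with the running $A_k$ that makes the stability axiom fire. Integrability is not an obstacle, since $\alpha_k Y^{(k)}\in\ul{\cL}^1(\R^{d+1})$ is already built into the definition of $\cM_k(\uD)$.
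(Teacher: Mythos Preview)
Your argument is correct and takes a genuinely different route from the paper's. Both proofs dispatch the easy inclusion identically, and both ultimately lean on $\ul\cF_0$ being trivial so that an $\ul\cF_0$-measurable multiplier is a non-negative constant and the bare cone axiom applies; you make this explicit, whereas the paper leaves it implicit in the line ``we only need to show $\xi^0$ is in the cone $\uD$ to deduce that $Z=\beta_0\xi^0$ is in $\uD$''.

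The substantive difference lies in the direction and machinery of the induction. The paper runs a \emph{backward} induction: it sets $\xi^{T-1}=Z^{T-1}$ and $\xi^t=\one{F_t}\kappa_t\,\xi^{t+1}+\one{F_t^c}Z^t$ with $F_t=\{\beta_t>0\}$ and $\kappa_t=\beta_{t+1}/\beta_t$, proves $\xi^t|_{t+1}=Z^t|_{t+1}$ and $\xi^t\in\uD$ by invoking the equivalent pasting characterisation of stability from \Cref{lem:eqstab}, and finally reads off the telescoping identity $Z=\beta_0\xi^0$. Your \emph{forward} induction is leaner: you build $A_0,\dots,A_T\in\uD$ with $\condE{A_k}{\ul\cF_k}=\condE{Z}{\ul\cF_k}$, set $A_{k+1}=\alpha_kY^{(k)}$, and apply the predictable $m$-stability axiom directly at the constant stopping time $\tau\equiv k$ to the pair $(A_k,Y^{(k)})$. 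This bypasses \Cref{lem:eqstab} and the $F_t$-splicing altogether; what it buys is a shorter, more transparent argument, the only bookkeeping being the running partner $A_k$ that legitimises each invocation of stability. The paper's construction, by contrast, produces an explicit element $\xi^0\in\uD$ whose $\beta_0$-multiple equals $Z$, which is of some independent structural interest but is not needed for the lemma as stated.
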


\begin{proof}
The inclusion $\uD  \subset \cap_{t=0}^{T-1} \cM_t(\uD )$ is trivial. In the following, we write $Z|_t$ for $\condE{Z}{\cF_t}$.

Now $Z \in \cap_{t=0}^{T-1} \cM_t(\uD )$, and we aim to show that $Z\in \uD $. So, for all $t \in \{0,1,\dots, T-1\}$, there exist $\beta_t \in  \ul{L}^0_{t,+}$ and $Z^t \in \uD $ such that $\beta_t Z \in \ul{\cL}^1_+(\R^{d+1}) \text{ and } Z|_{t+1}= \beta_t Z^t|_{t+1}$.

Define
\begin{align*}
\xi^{T-1}&=Z^{T-1} \\
\xi^t &= \one{F_t} \kappa_t \xi^{t+1}+\one{F^c_t}Z^t &\for t \in \{0,1,\dots, T-2\},
\end{align*}
where $F_t = \{ \beta_t>0\} $ and $\kappa_t = \beta_{t+1}/\beta_t$.

Note $ Z=Z|_T=\beta_{T-1}Z^{T-1}|_T=\beta_{T-1}\xi^{T-1}$ and 
\[Z=\beta_0 \kappa_0 \kappa_1 \cdots \kappa_{T-2} \xi^{T-1} = \beta_0 \xi^0. \]
Thus we only need to show $\xi^0$ is in the cone $\uD $ to deduce that $Z=\beta_0 \xi^0$ is in $\uD $. 
\paragraph{Claim} For all $t \in \{0,1,\dots, T-1\}$, we have $\xi^t|_{t+1}=Z^t|_{t+1}$ and $\xi^t \in \uD $.

We shall proceed by backwards induction, starting from the observation $\xi^{T-1} = Z^{T-1} \in \uD $. Suppose that for $s \ge t+1$, we have $\xi^s|_{s+1}=Z^s|_{s+1}$ and $\xi^s \in \uD $.
\begin{align*}
\xi^t|_{t+1} &= \condE{\one{F_t} \kappa_t \xi^{t+1}+\one{F^c_t}Z^t}{\ul{\cF}_{t+1}}
\\&=\condE{\one{F_t} \kappa_t Z^{t+1}+\one{F^c_t}Z^t}{\ul{\cF}_{t+1}}
\end{align*}
Now, whilst $\beta_t>0$, i.e. on the event $F_t$,
\[ \condE{ \kappa_t Z^{t+1}}{\ul{\cF}_{t+1}} = \frac{1}{\beta_t}\condE{ \beta_{t+1} Z^{t+1}}{\ul{\cF}_{t+1}} = \frac{1}{\beta_t}\condE{ Z|_{t+2}}{\ul{\cF}_{t+1}} = \frac{Z|_{t+1}}{\beta_t}  = Z^{t}|_{t+1} \]
allowing us to conclude  
\[ \xi^t|_{t+1} = \condE{\one{F_t} Z^{t}|_{t+1}+\one{F^c_t}Z^t}{\ul{\cF}_{t+1}}= Z^{t}|_{t+1}.\]
By hypothesis $\uD $ is stable, so by \Cref{lem:eqstab} we see that $\xi^t \in \uD $.
\end{proof}


\begin{lem}\label{lem:A2} For $\uD \subset \ul{\cL}^1_+(\R^{d+1})$, define
\[ [\uD ] := \bigcap_{t=0}^{T-1}\left( \ol{\conv} \cM_t(\uD )\right), \] 
where $ \cM_t(\uD )$ is as defined in \eqref{eq:MtD }, the symbol $ \ol{\conv}$ denoting the closure in $\ul{\cL}^1_+(\R^{d+1})$ of the convex hull.
\begin{enumerate}[(a)]
\item
$[\uD ]$ is the smallest predictably $m$-stable closed convex cone in $\ul{\cL}^1_+(\R^{d+1})$ containing $\uD $;
\item
$\uD =[\uD ]$ if and only if $\uD $ is a predictably $m$-stable closed convex cone in $\ul{\cL}^1_+(\R^{d+1})$.
\end{enumerate}
\end{lem}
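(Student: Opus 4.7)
For (a), I plan to verify in turn that $[\uD]$ is (i) a closed convex cone containing $\uD$, (ii) predictably m-stable, and (iii) minimal among such cones. Item (i) is immediate from the definition: $[\uD]$ is an intersection of closed convex sets, and the cone property follows because each $\cM_t(\uD)$ is stable under positive scalar multiplication — one merely rescales the witness $\alpha_t$. Containment $\uD\subseteq[\uD]$ is immediate, since every $Z\in\uD$ sits in $\cM_t(\uD)$ with the trivial witness $(1,Z)$. For (ii), I would work with the equivalent local criterion in \Cref{lem:eqstab}(i): given $Y,W,Z\in[\uD]$, $F\in\cF_t$, and positive $\alpha,\beta\in L^0_t$ with $X:=\one F\alpha Y+\one{F^c}\beta W$ satisfying $\condE{X}{\cF_t}=\condE{Z}{\cF_t}$, the task reduces to placing $X\in\ol\conv\cM_s(\uD)$ for every $s$. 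This should follow because $\cM_s(\uD)$ is already stable under $\cF_s$-event splicing of its members (witnesses glue along $F$ and $F^c$), and this stability then survives the closed convex hull.

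For (iii), take any predictably m-stable closed convex cone $\cS\supseteq\uD$. \Cref{lem:A1} gives $\cS=\bigcap_t\cM_t(\cS)$, and monotonicity of $\cM_t(\cdot)$ yields $\cM_t(\uD)\subseteq\cM_t(\cS)$, hence $\ol\conv\cM_t(\uD)\subseteq\ol\conv\cM_t(\cS)$. To conclude $[\uD]\subseteq\cS$, the key auxiliary claim is that $\cM_t(\cS)$ is itself a closed convex cone, after which $\ol\conv\cM_t(\cS)=\cM_t(\cS)$ and intersecting over $t$ collapses the right-hand side to $\cS$. Part (b) then falls out with no further work: the $\Leftarrow$ direction is (a) applied to $\uD$ (the smallest predictably m-stable closed convex cone containing $\uD$ is $\uD$ itself when $\uD$ already has the asserted properties), and $\Rightarrow$ follows directly from (a).

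The principal obstacle is this auxiliary claim: \emph{if $\cS\subseteq\ul\cL^1_+(\R^{d+1})$ is a predictably m-stable closed convex cone, then so is $\cM_t(\cS)$}. The cone property is built in. Convexity is more delicate: combining two witnesses $(\alpha_i,Z_i')$ for $Z_1,Z_2\in\cM_t(\cS)$ into a single witness $(\alpha,Z')$ for $\lambda Z_1+(1-\lambda)Z_2$ (the natural choice being $\alpha=\lambda\alpha_1+(1-\lambda)\alpha_2$ and $Z'=(\lambda\alpha_1 Z_1'+(1-\lambda)\alpha_2 Z_2')/\alpha$) requires $\cS$ to absorb an $\cF_t$-weighted combination of $Z_1',Z_2'$, which in turn must be extracted from predictable m-stability via \Cref{lem:eqstab}. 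Closedness calls for an $L^1$-limit argument on approximating sequences $Z_n\to Z$ with witnesses $(\alpha_n,Z_n')$; because the $\alpha_n$ live only in $L^0_{t,+}$ and may be unbounded, one must normalise with care (for instance, truncating $\alpha_n\wedge n$ against a compensating rescaling of $Z_n'$) before extracting a limiting witness inside $\cS$.
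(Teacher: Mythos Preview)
Your architecture matches the paper's: prove $[\uD]$ is a closed convex cone, then predictably m-stable via \Cref{lem:eqstab}(i), then minimal via monotonicity of $[\cdot]$ together with \Cref{lem:A1}, and read off (b). For minimality you are in fact more explicit than the paper, which simply writes ``$\uD'=[\uD']$ by \Cref{lem:A1}'' and thereby tacitly uses that $\ol\conv\cM_t(\cS)=\cM_t(\cS)$ for stable closed convex $\cS$; you have isolated this as your auxiliary claim. That is a legitimate sharpening, though your proposed proof of the claim (particularly closedness via renormalised witnesses) is speculative and would need real work.

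The genuine gap is in your stability step (ii). You propose to place $X=\one F\alpha Y+\one{F^c}\beta W$ into $\ol\conv\cM_s(\uD)$ for every $s$ by ``$\cF_s$-event splicing of witnesses along $F,F^c$''. But $F$ lies in $\cF_t$, not in $\cF_s$: for $s<t$ the set $F$ is generally not $\cF_s$-measurable, so no gluing of witnesses can produce a single $\cF_s$-measurable scale $\alpha_s$ for $X$. The paper handles the two regimes by entirely different mechanisms. For $s<t$, the third element $Z$ --- which your sketch never uses --- carries the argument: since $s+1\le t$, the matching $\condE{X}{\cF_t}=\condE{Z}{\cF_t}$ forces $\condE{X}{\cF_{s+1}}=\condE{Z}{\cF_{s+1}}$, and membership in $\cM_s(\uD)$ depends only on this conditional expectation; one then approximates $Z$ by $Z^n\in\conv\cM_s(\uD)$ and defines $X^n:=\condE{Z^n}{\cF_t}+X-\condE{X}{\cF_t}$, which shares the same $\cF_{s+1}$-projection as $Z^n$ and hence lies in $\conv\cM_s(\uD)$. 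For $s\ge t$ one does have $F\in\cF_t\subseteq\cF_s$, so your splicing intuition is on target, but $\alpha,\beta\in L^0_{t,+}$ may be unbounded; the paper truncates, setting $X^{n,K}:=\ind{\alpha\le K}\alpha Y^n\one F+\ind{\beta\le K}\beta W^n\one{F^c}$ with $Y^n,W^n\in\conv\cM_s(\uD)$ approximating $Y,W$, checks $X^{n,K}\in\conv\cM_s(\uD)$, and sends $n\to\infty$ then $K\to\infty$. Without separating these two cases, your plan for (ii) does not go through.
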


\begin{proof}
It is clear that $[\uD ]$ is a closed convex cone in $\ul{\cL}^1$. To see that $[\uD ]$ is stable, we use the definition of stability according to \Cref{lem:eqstab}. Fix $t\in \{0,1,\dots, T\}$, and suppose $Y,W \in [\uD ]$ are such that there exists $Z\in[\uD ]$, a set $F\in \ul{\cF}_t$, positive processes $\alpha, \beta \in \ul{\cL}^0(\ul{\cF}_t)$ with $\alpha Y, \beta W \in \ul{L}^1(\R^{d+1})$ and 
\[ X:= \alpha Y  \one{F} +  \beta W\one{F^c}\]
satisfies $\condE{X}{\ul{\cF}_t}=\condE{Z}{\ul{\cF}_t}$. We aim to show $X$  is also a member of $[\uD ]$, that is,
\[X \in   \ol{\conv} \cM_s(\uD ) \qquad \forall  0 \le s  \le T-1 .\]

First consider $s \in \{0,1,\dots,  t-1 \}$. From the definition of $\cM_s(\uD )$,
\[ Z \in \conv\cM_s(\uD ) \q{and} \condE{X}{\ul{\cF}_t} = \condE{Z}{\ul{\cF}_t}  \qiq X \in\conv \cM_s(\uD ), \]
since the membership of an integrable $Z$ in $\cM_s(\uD )$ only depends on its conditional expectation $ \condE{Z}{\ul{\cF}_{s+1}}$.
More generally, we show
\[ Z \in  \ol{\conv}\cM_s(\uD ) \q{and} \condE{X}{\ul{\cF}_t} = \condE{Z}{\ul{\cF}_t}  \qiq X \in  \ol{\conv}\cM_s(\uD ). \]
 Take a sequence $(Z^n) \subset \conv \cM_s(\uD )$ such that $Z^n \to Z$ in $\ul{\cL}^1$. Define the sequence
\[ X^n :=  \condE{Z^n}{\ul{\cF}_t} + X-  \condE{X}{\ul{\cF}_t} .\]
Note that $X^n \to X$ as $n \to \infty$ and for each $n$, $\condE{X^n}{\ul{\cF}_t} = \condE{Z^n}{\ul{\cF}_t}$. So $X^n \in \conv \cM_s(\uD )$, thus $X \in  \ol{\conv}\cM_s(\uD )$.

Now consider $s \in \{t,t+1,\dots,  T-1 \}$. We begin by choosing sequences $(Y^n), (W^n) \subset \conv \cM_s(\uD )$ such that $Y^n \to Y$ and $W^n \to W$ in $\ul{\cL}^1$. Define, for $n,K \in \bN$,
\[ X^{n,K}:= \ind{\alpha \le K}  \alpha Y^n\one{F} + \ind{\beta \le K}  \beta W^n\one{F^c}. \]
The fact that $X^{n,K} \in \conv \cM_s(\uD )$ follows from the following two elementary properties:
\begin{enumerate}
\item
if $Z \in \conv\cM_s(\uD )$ and $g \in \ul{\cL}^\infty_+(\ul{\cF}_t)$, then $gZ \in \conv\cM_s(\uD )$;\footnote{Let $Z \in \cM_s(\uD )$. Then 
\begin{align*}
\exists \alpha_t \in \ul{L}^0_{t,+},  \exists Z'\in \uD  &\text{ such that } \alpha_t Z \in \ul{\cL}^1 \text{ and } Z|_{t+1}= \alpha_t Z'|_{t+1}
\\\implies\exists \alpha_tg \in \ul{L}^0_{t,+},  \exists Z'\in \uD  &\text{ such that } \alpha_t gZ \in \ul{\cL}^1 \text{ and } gZ|_{t+1}= \alpha_tg Z'|_{t+1}
\end{align*}and then take convex hulls.} and
\item
if $Z^i \in \conv\cM_s(\uD )$ for $i=1,2$, then $Z^1+Z^2 \in \conv\cM_s(\uD )$.
\end{enumerate}

Now, for any $K$ fixed, $\ind{\alpha \le K}  \alpha Y^n \to \ind{\alpha \le K}  \alpha Y$ as $n \to \infty$, and similarly $ \ind{\beta \le K}  \beta W^n \to  \ind{\beta \le K}  \beta W$. Since $\alpha Y$ and $\beta W$ are integrable, we now send $K \to \infty$ to see that
\[ X = \lim_{K\to \infty} \lim_{n \to \infty} X^{n,K} \in  \ol{\conv} \cM_s(\uD ) \]
which completes the proof that $X$ is indeed a member of $[\cD ]$.

To show minimality of $[\uD ]$ in the class of stable closed convex cones containing $\uD $, we note that if $\uD  \subset \uD '$ then $ [\uD ] \subset [\uD ']$. Taking $\uD '$ to be another stable closed convex cone containing $\uD $, we have $\uD '=[\uD ']$ by \Cref{lem:A1}, and so $\uD '$ contains $[\uD ]$. 
To show the equivalence in statement (b), the forward implication is due to the stability of $[\uD ]$, and the reverse is \Cref{lem:A1}.
\end{proof}

The proof of equivalence of statements (ii) and (iii) of \Cref{thm:main} is underpinned by the following
\begin{theorem}\label{thm:crucialClaim}
For any $t\in \{0,1,\dots,T-1\}$, 
\begin{equation}\label{eq:MandK}
K_t(\ul\cA,\mathbf{V})=(\cM_t(\ul\cA(\mathbf{V})^*))^*.
\end{equation}•
\end{theorem}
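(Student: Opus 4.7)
The goal is to prove the two polar inclusions between $K_t(\ul\cA,\mathbf{V}) = \cA_t(\mathbf{V})\cap\cL^\infty_{t+1}(\R^{d+1})$ and $\cM_t(\ul\cA(\mathbf{V})^*)$, invoking \Cref{lem:sackV} whenever it is convenient to pass between $\cA_0(\mathbf{V})^*$ and $\cA_0^*\mathbf{V}$.

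\emph{Forward inclusion $K_t(\ul\cA,\mathbf{V})\subseteq\cM_t(\ul\cA(\mathbf{V})^*)^*$.} I would take $X\in K_t(\ul\cA,\mathbf{V})$ and $Z\in\cM_t(\ul\cA(\mathbf{V})^*)$ with witnesses $\alpha_t\in\ul L^0_{t,+}$ and $Z'\in\ul\cA(\mathbf{V})^*$. Since $X$ is $\cF_{t+1}$-measurable,
\[
\E[Z\cdot X] \,=\, \E\bigl[\E[Z|\cF_{t+1}]\cdot X\bigr] \,=\, \E[\alpha_t Z'\cdot X].
\]
The truncation $\alpha_t^K:=\alpha_t\wedge K$ is a bounded positive $\cF_t$-measurable random variable, so stability of $\cA_t$ under such multiplication (\Cref{prop:background}) delivers $\alpha_t^K X\cdot\mathbf{V}\in\cA_t$. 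The time-independence of the representing set $\cQ$ gives $\cA_t\subseteq\cA_0$ (because $\rho_t(Y)\le 0 \Rightarrow \E_\bQ[Y|\cF_t]\le 0$ for every $\bQ\in\cQ$, so $\E_\bQ[Y]\le 0$), whence $\alpha_t^K X\in\ul\cA(\mathbf{V})$ and $\E[\alpha_t^K Z'\cdot X]\le 0$. Componentwise non-negativity of $Z'=\wt Z'\mathbf{V}$ (since $\wt Z'\in\cA_0^*\subseteq L^1_+$), combined with $\alpha_t Z'\in\cL^1$ and $X\in\cL^\infty$, makes dominated convergence applicable, yielding $\E[\alpha_t Z'\cdot X]\le 0$ in the limit $K\to\infty$.

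\emph{Reverse inclusion $\cM_t(\ul\cA(\mathbf{V})^*)^*\subseteq K_t(\ul\cA,\mathbf{V})$.} I would take $X\in\cM_t(\ul\cA(\mathbf{V})^*)^*$ and argue $\cF_{t+1}$-measurability of $X$ and $X\cdot\mathbf{V}\in\cA_t$ separately. Every $Z\in\cL^1(\R^{d+1})$ with $\E[Z|\cF_{t+1}]=0$ lies in $\cM_t(\ul\cA(\mathbf{V})^*)$ through the degenerate witnesses $\alpha_t=0$, $Z'=0$; since both $\pm Z$ are in $\cM_t$, this forces $\E[Z\cdot X]=0$. Specialising to $Z=(h-\E[h|\cF_{t+1}])e_i$ for arbitrary $h\in L^\infty$ gives $\E[hX^i]=\E[h\,\E[X^i|\cF_{t+1}]]$ for every bounded $h$, so $X^i\in L^\infty_{t+1}$. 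For acceptability, fix $\bQ\in\cQ$ and $F\in\cF_t$: with witnesses $\alpha_t=\one F$ and $Z'=(d\bQ/d\P)\mathbf{V}\in\cA_0^*\mathbf{V}=\ul\cA(\mathbf{V})^*$, the vector $(d\bQ/d\P)\one F\mathbf{V}$ lies in $\cM_t(\ul\cA(\mathbf{V})^*)$, so
\[
0 \,\ge\, \E\bigl[(d\bQ/d\P)\one F\mathbf{V}\cdot X\bigr] \,=\, \E_\bQ[\one F(X\cdot\mathbf{V})].
\]
As $F\in\cF_t$ varies, the Bayes version of $\E_\bQ[X\cdot\mathbf{V}|\cF_t]$ is forced to be $\le 0$ $\P$-a.s. for each $\bQ$; the essential supremum over $\bQ\in\cQ$ then yields $\rho_t(X\cdot\mathbf{V})\le 0$, so $X\cdot\mathbf{V}\in\cA_t$.

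\emph{Main obstacle.} The creative step lies in guessing the right test vectors in $\cM_t(\ul\cA(\mathbf{V})^*)$ for the reverse inclusion: the degenerate witnesses $\alpha_t=0$, $Z'=0$ expose $\cF_{t+1}$-measurability, while $\alpha_t=\one F$ paired with $Z'=(d\bQ/d\P)\mathbf{V}$ rebuild the dual representation of $\cA_t$ one measure and one indicator at a time. The forward inclusion is routine once one has the stability of $\cA_t$ under bounded $\cF_t$-measurable multiplication, the containment $\cA_t\subseteq\cA_0$, and dominated convergence after truncating $\alpha_t$.
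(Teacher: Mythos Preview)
Your proposal is correct and follows essentially the same route as the paper's own proof. The forward inclusion (truncate $\alpha_t$, use that $\alpha_t^K X\in\cA_0(\mathbf V)$, pass to the limit) and the $\cF_{t+1}$-measurability argument (test against $Z$ with $\condE{Z}{\cF_{t+1}}=0$) are identical in content; for the acceptability part of the reverse inclusion the paper argues abstractly that $\cB^*\subset\cM_t(\cB^*)$ and that $\cM_t(\cB^*)$ is stable under multiplication by $L^\infty_+(\cF_t)$, which is exactly what your concrete test vectors $(d\bQ/d\P)\one F\mathbf V$ unpack.
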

We defer the proof until section \ref{sec:pomr}.

Thus we have characterised each ``summand'' in the representation (cf. \cref{defn:predrep}) as a dual set of the predictable pre-image of the dual of the set of acceptable portfolios in $\mathbf{V}$.

\section{Examples}

In this section we present a brief exposition of the versatility of the framework.

\subsection{Modelling transaction costs}
We now present an example motivated by buying and selling a  stock in a market with transaction costs across two time periods ($T=2$).  Let $N_1$ and $N_2$ be two independent and identically distributed standard Gaussian random variables under objective measure $\P$. Fix $M>0$ and define the truncated random variables $\wt N_i := N_i \wedge M$, for $i=1,2$. Define the constant $a_M$ such that $\bE_\P[\exp(\wt N_i - a_M)]=1$:
\[ a_M := \log \bE_\P[\exp(\wt N_1)] = \log\left(e^\half \Phi(M-1) + e^M (1-\Phi(M))\right).\]
Define the filtration by $\cF_0$ trivial, $\cF_1 = \sigma(\wt N_1)$, and $\cF_2 = \sigma(\wt N_1,\wt N_2)$. 

The market consists of a ``cash account'' $v_0 \equiv 1$ and a ``stock'' with time-2 price \[v_1 = \exp\left(\wt N_1 + \wt N_2 - 2a_M\right).\]

 Set $\mathbf{V} = (v_0,v_1)$. At time 0, to buy 1 unit of $v_1$ a purchaser must pay $1+\lambda$ cash, and to sell 1 unit of $v_1$ a vendor receives $1-\lambda$. At time 1, knowing the value of $\wt N_1$, buying 1 unit of $v_1$  costs $(1+\lambda)e^{\wt N_1-a_M}$, and selling 1 unit of $v_1$ makes $(1-\lambda)e^{\wt N_1-a_M}$. Define the $\cF_1$-cone of a set $E$ by $\cone_{\cF_1}E = \{ \alpha w : \alpha \in L^\infty_+(\cF_1), \, w \in \conv E\}$.   If we also allow wealth to be consumed, we arrive at the following set of claims to which we may trade from zero initial wealth:
\begin{align*}
 \cA =&\cone \left\{( -(1+\lambda) , 1 ) ,  ( 1-\lambda , -1 )\right\} \cdot \mathbf{V}
\\&\oplus \cone_{\cF_1}\left\{( -(1+\lambda)e^{\wt N_1-a_M} , 1 ) ,  ( (1-\lambda)e^{\wt N_1-a_M} , -1 )\right\} \cdot \mathbf{V} 
\\&\oplus(- L^\infty_+).
\end{align*}
In the sum, the first term describes those claims that can be realised at time 0, the second term describes those claims that can be realised at time 1, and the last describes consumption of wealth at any time.
It is easy to show that the dual of $ \cA$ is
\[ \cQ := \left\{ \bQ \ll \P : \bE_\bQ[v_1] \in [1-\lambda,1+\lambda] \q{and} \bE_\bQ[\exp(\wt N_2 - a_M)|\cF_1] \in [1-\lambda,1+\lambda] \right\}.\]
Note that $\cQ$ is a convex
set of probability measures that is not m-stable. Define a coherent risk measure  by $\rho_t(X) = \sup_{\bQ \in \cQ}\bE_\bQ[X|\cF_t]$ for $t = 0,1$.  We have $\rho_0(v_1) = 1+\lambda$, but
\[  \rho_1(v_1) = e^{\wt N_1-a_M}\sup_{\bQ \in \cQ} \bE_\bQ[e^{\wt N_2 - a_M}|\cF_1] = (1+\lambda)e^{\wt N_1-a_M},\] 
and so \[\rho_0(\rho_1(v_1)) = (1+\lambda)\sup_{\bQ \in \cQ}\bE_\bQ[e^{\wt N_1-a_M}]  = \frac{(1+\lambda)^2}{1-\lambda} > \rho_0(v_1). \]
The last line follows from the inequalities for any $\bQ \in \cQ$:
\[ 1+\lambda \ge \bE_\bQ[v_1] = \bE_\bQ[e^{\wt N_1-a_M} \bE_\bQ[e^{\wt N_2-a_M}|\cF_1]] \ge (1-\lambda)\bE_\bQ[e^{\wt N_1-a_M}].\]

Now, we may show that $\cQ$ must be $\mathbf{V}$-m-stable: we take  two measures $\bQ^\Lambda$ and $\bQ^M$ with Radon-Nikodym derivatives $\Lambda$ and $M$, form the pasting at a stopping time $\tau\in \{0,1,2\}$, and check that the pasted measure $\wt \bQ$, defined by
\[ \frac{d \wt\bQ}{d\P} = \frac{M}{\bE[M|\cF_\tau]} \bE[\Lambda |\cF_\tau] \]
  is also in $\cQ$.  
  Noting that $\ind{\tau=2} = 1 - \ind{\tau \le 1} \in L^\infty_1$, we calculate
\begin{align*}
\bE_{\wt\bQ}[\exp(\wt N_2 - a_M)|\cF_1] &= \bE\left[ \left.\left( \frac{M}{\bE[M|\cF_1]}\ind{\tau \le 1} + \frac{\Lambda}{\bE[\Lambda|\cF_1]}\ind{\tau=2} \right)\exp(\wt N_2 - a_M)\right|\cF_1 \right]
\\&= \bE_{\bQ^M}[\exp(\wt N_2 - a_M)|\cF_1]\ind{\tau \le 1} +\bE_{\bQ^\Lambda}[\exp(\wt N_2 - a_M)|\cF_1]\ind{\tau=2},
\end{align*}•
  so we see that the condition $\bE_{\wt\bQ}[\exp(\wt N_2 - a_M)|\cF_1] \in [1-\lambda,1+\lambda]$ is  satisfied. To satisfy the definition of $\mathbf{V}$-m-stability, we need only check that $\wt \bQ \in \cQ$ for those $\bQ^\Lambda$ and $\bQ^M$ that satisfy
  \[ \bE_{\bQ^\Lambda}[v_1 | \cF_\tau] = \bE_{\bQ^M}[v_1 | \cF_\tau]. \]
Hence, we now calculate
\begin{align*}
\bE_{\wt\bQ}[v_1] &= \bE\left[ \frac{\bE[\Lambda |\cF_\tau]}{\bE[M|\cF_\tau]} \bE[Mv_1 |\cF_\tau] \right]
\\&=\bE\left[ {\bE[\Lambda |\cF_\tau]} \bE_{\bQ^M}[v_1 |\cF_\tau] \right]
\\&=\bE\left[ {\bE[\Lambda |\cF_\tau]} \bE_{\bQ^\Lambda}[v_1 |\cF_\tau] \right]
\\&=\bE_{\bQ^\Lambda}[v_1].
\end{align*}•
Thus $\cQ$ is $\mathbf{V}$-m-stable.

\subsection{A Haezendonck--Goovaerts risk measure}
The following is an example employing the so-called Haezendonck--Goovaerts risk measures; we refer the reader to the work of Bellini and Rosazza Gianin \cite{BRG08}. Consider a two-period binary branching tree, with $\P\{\omega\}=\frac{1}{4}$ for all four elements $\omega \in \Omega$.  We choose a (normalised) Young function $\Phi(x)=x^2$, and define the Orlicz premium principle to be the unique solution $H_\alpha(X)$ of the equation
\[ \bE\left[ \Phi \left( \frac{X}{H_\alpha(X)} \right)\right] = 1 - \alpha \qquad \for X \neq 0; \qquad H_\alpha(0):=0. \]
Fix $\alpha = \half$, and rearrange the above to see $H_\half(X)=\sqrt{2} \|X\|_2 = \sqrt{2} \left(\bE\left[ X^2 \right]\right)^\half$.
We now define the \emph{Haezendonck measure}  to be
\[ \rho_0(X)=\sup_{\bQ \in \cQ} \bE_\bQ[X] \qq{where} \cQ:=\{ \bQ \ll \P: \bE_\bQ[Y] \le H_\half(Y) \quad \forall Y \in L^\infty_+ \} .\]
We write $\bQ(\{i\})=: q_i$ for a measure $\bQ$ on $(\Omega,\cF)$, and $X(i)=x_i$ for a random variable $X$ on $(\Omega,\cF,\P)$. First, we characterise $\cQ$. Note that the constraint in the definition of $\cQ$ implies
\[ \sup_{0\neq Y \in L^\infty_+ }\bE\left[ \frac{d\bQ}{d\P} \frac{Y}{\|Y\|_2}\right] \le \sqrt{2}. \]
The supremum is attained upon choosing $Y=\frac{d\bQ}{d\P}$, so the above inequality implies $\left\|\frac{d\bQ}{d\P}\right\|_2^2 \le 2$, 
thus
\[ \cQ=\left\{ \bQ=(q_1,\dots, q_4): q_i \ge 0, \quad \sum_{i=1}^4 q_i=1,\quad \sum_{i=1}^4 q_i^2 \le \half\right\}.\]
\paragraph{$\cQ$ is not m-stable} Define measures $\bQ^\Lambda$ and $\bQ^M$ from $\Lambda_2=2 \times \ind{1,2}$ and $M_2=2 \times \ind{1,3}$ respectively. We see that both are elements of $\cQ$, and their restrictions to $(\Omega, \cF_1)$ are described by $\Lambda_1=\bE[\Lambda_2|\cF_1] = 2 \times \ind{1,2}$, and  $M_1=\bE[M_2|\cF_1] =  1$. We form the time-1 pasting of the measures  $\bQ^\Lambda$ and $\bQ^M$ by setting
\[ \frac{d\wt\bQ}{d\P}=\frac{\Lambda_1}{M_1} M_2=4 \times \ind{1}.\]
Here, $ \sum_{i=1}^4\wt q_i^2=1>\half$, so $\wt \bQ \not \in \cQ$, and the set $\cQ$ is not m-stable.

Now, set 
\[ \mathbf{V} = \left( 1, \sqrt{2} \ind{1} + 1, \sqrt{2} \ind{3} + 1  \right).\]
\paragraph{$\cQ$ is $\mathbf{V}$-m-stable}
We calculate $\bE_{\bQ}[\mathbf{V}|\ul{\cF}_1]$ as in \Cref{eg:stabAVaR}, to see that, for $\bQ, \bQ' \in \cQ$, our additional condition is 
\[ \frac{q_1}{q_2} \ind{q_1+q_2>0}= \frac{q_1'}{q_2'} \ind{q_1'+q_2'>0} \q{and} \frac{q_3}{q_4} \ind{q_3+q_4>0}= \frac{q_3'}{q_4'} \ind{q_3'+q_4'>0}. \]
Thus we see that any pasting $\bQ \oplus_{t=1} \bQ'$ that satisfies this condition is in fact equal to $\bQ$, which is trivially in $\cQ$.

\subsection{Reserving for cash flows}
We describe a probabilistic approach to wealth processes using the notation of Acciaio, F\"ollmer, and Penner \cite{AFP12}. As before, we fix a terminal time $T<\infty$, a discrete time set $\bT := \{0,1,\dots, T\} $, and a stochastic basis $(\Omega,\cF,(\cF_t)_{t\in \bT} , \P)$. On the product space  $\underline{\Omega} := \Omega \times \bT$, define the optional $\sigma$-algebra up to time $t \in \bT$ as
\begin{align*}
\underline{\cF}_t &:= \sigma\left( A_s \times \{s\}, A_t \times \bT_t: s \le t, A_s \in \cF_s\right), &\text{where }\bT_t := \{t, t+1,\dots, T\} ,
\\\underline{\cF} &:= \underline{\cF}_T.
\end{align*}
Define the reference probability measure $\underline{P}:= \P \otimes \mu$  on $(\underline{\Omega}, \underline{\cF})$ via the expectation 
\[ \bE_{\underline{P}} [X] = \bE\left[ \sum_{s=0}^T X_s \mu_s \right] \]
where $\bE=\bE_\P$ and $\mu$ is an \emph{optional random probability measure} on $\bT$, i.e., an $\cF_t$-adapted process such that $\mu_t >0$ for all $t \in \bT$ and $\sum_{t \in \bT} \mu_t=1$. 

We use the underline to denote multiperiod variants of standard notation; for example $\underline{L}^\infty := L^\infty (\underline{\Omega}, \underline{\cF}, \underline{P})$ is the space of all bounded random variables on the extended probability space $ (\underline{\Omega}, \underline{\cF}, \underline{P})$, elements of which may alternatively be viewed as processes $X = (X_t)_{t \in \bT}$. We write $\underline{\cL}^{1}(\R^{d+1}) := L^1 (\underline{\Omega}, \underline{\cF}, \underline{P}; \R^{d+1})$ (respectively $\underline{\cL}^{\infty}(\R^{d+1})$) for $\underline{P}$-integrable (respectively  bounded) random variables $X$ such that each $X_t$ is $\R^{d+1}$-valued, for $t\in \bT$. Non-negative elements of $\underline L^\infty $ are denoted by $\underline L^\infty_+$, and $\underline\cF_t$-measurable elements of $\underline L^\infty $ are denoted by $\underline L^\infty_t$.

For $0 \le t \le s \le T$, define the projection $\pi_{s,t}:\underline{L}^\infty \to \underline{L}^\infty$
\[ \pi_{s,t}(X)_r = \ind{s \le r} X_{r \wedge t} ,\quad \for r \in \bT. \]
Define  $\cR^\infty$ to be those \emph{adapted}  processes $X \in \underline{L}^\infty$, and set $\cR^\infty_{t,s}=\pi_{s,t}(\cR^\infty)$ and $\cR^\infty_{t}=\pi_{t,T}(\cR^\infty)$. We use the notation $X|_t$ for the conditional expectation $\bE_{\underline P}[X| \underline \cF_t]\equiv \bE[X| \underline \cF_t]$, which may be viewed as a process, constant after time $t$; we write $X_t$ to denote the time-$t$ realisation of the process $X$.

We remark that there is a one-to-one correspondence between pricing measures for processes $\rho_t : \cR^\infty_t \to L^\infty_t$ and pricing measures $\underline \rho_t: \cR^\infty \to \underline L^\infty_t $ for random variables on $\underline \Omega$ equipped with the optional $\sigma$-algebra, via
\begin{equation}
\label{eq:ccrmrv} \underline \rho_t(X) = \sum_{s=0}^{t-1} X_s \ind{s} + \rho_t(\pi_{t,T}(X))\one{\bT_t}.
\end{equation}

\section{Proof of main result}
\label{sec:pomr}
The proof is a little involved. We show that $(i)\Rightarrow(ii)\Leftrightarrow(iii)$. Then noting that (see Remark \ref{time}) (iii) implies the same result on the filtration $(\cF_t,\ldots,\cF_T)$, we deduce a time-shifted version of (ii) which implies (i).

We will use the following lemma:
\begin{lem}\label{lem:coneflip}
Suppose for each $t \in \bT$, $\cC_t \subset E$ is a closed convex cone. Then
\[ \left( \cap_t \cC_t\right)^* = \ol{\conv\left\{\cup_t \cC_t^*\right\}} = \ol{\oplus_t \cC_t^*}. \]
\end{lem}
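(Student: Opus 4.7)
The plan is to prove the two equalities separately, with the bipolar theorem doing the heavy lifting for the second one.

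First I would handle the elementary equality $\ol{\conv\{\cup_t \cC_t^*\}} = \ol{\oplus_t \cC_t^*}$. Since each $\cC_t^*$ is a convex cone containing $0$, a direct check shows that $\conv(\cup_t \cC_t^*) = \sum_t \cC_t^*$: the inclusion $\subseteq$ is immediate from $\lambda z_t \in \cC_t^*$ whenever $z_t \in \cC_t^*$ and $\lambda \ge 0$, and the reverse inclusion follows from writing $\sum_t z_t = \sum_t \frac{1}{|\bT|}(|\bT|\, z_t)$, which is a convex combination of elements of $\cup_t \cC_t^*$ because each $|\bT|\, z_t$ lies in $\cC_t^*$ by cone-invariance. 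Closures then yield the claim.

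For the second equality, the key identity is $(\sum_t \cC_t^*)^* = \cap_t \cC_t$. Indeed, an element $X \in E$ satisfies $\langle Z, X \rangle \le 0$ for every $Z = \sum_t Z_t$ with $Z_t \in \cC_t^*$ iff $\langle Z_s, X \rangle \le 0$ for every $s$ and every $Z_s \in \cC_s^*$ (taking all other summands equal to $0 \in \cC_t^*$), i.e.\ iff $X \in \cC_s^{**}$ for every $s$; since each $\cC_t$ is a closed convex cone, the bipolar theorem gives $\cC_t^{**} = \cC_t$, so this is $\cap_t \cC_t$. Taking polars once more and invoking bipolar again (now on the closed convex cone $\ol{\sum_t \cC_t^*}$, which coincides with the bipolar of $\sum_t \cC_t^*$ since that set is a convex cone containing $0$), we obtain
\[ (\cap_t \cC_t)^* \;=\; \Bigl(\bigl(\textstyle\sum_t \cC_t^*\bigr)^*\Bigr)^* \;=\; \bigl(\textstyle\sum_t \cC_t^*\bigr)^{**} \;=\; \ol{\textstyle\sum_t \cC_t^*}, \]
which combined with the first step finishes the proof.

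There is no real obstacle here, since everything reduces to the bipolar theorem in the locally convex duality relevant to the paper (either $(L^\infty, \sigma(L^\infty, L^1))$ paired with $L^1$, or vice versa). The only mild care needed is to justify that the polar operation $A \mapsto A^*$ is continuous enough to commute with closure in the sense used above, but this is immediate from the continuity of the pairing $\langle Z, X\rangle$ in each variable separately, which gives $A^* = (\ol A)^*$ for any $A$.
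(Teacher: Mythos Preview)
Your proposal is correct and is precisely the standard argument that the paper is alluding to. The paper's own proof consists of the two sentences ``The first equality is well-known, the second is clear,'' so you have simply supplied the details behind those assertions: the second equality via $\conv(\cup_t \cC_t^*)=\sum_t \cC_t^*$ for finitely many convex cones containing $0$, and the first via $(\sum_t \cC_t^*)^*=\cap_t \cC_t^{**}=\cap_t \cC_t$ together with the bipolar theorem.
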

\begin{proof}
The first equality is well-known, the second is clear. 
\end{proof}

\begin{proof}[Proof of \Cref{thm:main}]
$(i)\Rightarrow (ii)$

We show first that
\begin{claim}
$$\ce_t(X) := \essinf \{ \rho_t(Y \cdot \V ) :\; Y\in L^\infty_{t+1}(\R^{d+1})\text{ and }X-Y\cdot \V\in \cA_{t+1}\}
$$
is a conditional coherent risk measure and its acceptance set $\cB_t$ is given by
\begin{equation}\label{induct}
\ol{K_t(\V)\cdot\V + \cA_{t+1}}.
\end{equation}
\end{claim}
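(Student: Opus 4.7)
The plan is to establish the two assertions in the claim: (a) $\ce_t$ satisfies the axioms of a conditional coherent risk measure of \Cref{defn:ccrmp}, and (b) its acceptance set $\cB_t := \{X : \ce_t(X) \le 0\}$ coincides with $\ol{K_t(\V) \cdot \V + \cA_{t+1}}$.

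For (a), I would verify each axiom by manipulating the feasibility set $\Gamma(X) := \{Y \in L^\infty_{t+1}(\R^{d+1}) : X - Y \cdot \V \in \cA_{t+1}\}$. Normalisation reduces to $\ce_t(0) = 0$: taking $Y \equiv 0$ gives $\ce_t(0) \le 0$, while for the reverse, any $Y$ with $-Y\cdot\V \in \cA_{t+1}$ satisfies $\bE_\bQ[Y \cdot \V | \cF_{t+1}] \ge 0$ for every $\bQ \in \cQ$ (since $\rho_{t+1}(-Y\cdot\V)$ is an esssup over $\cQ$), hence $\bE_\bQ[Y \cdot \V | \cF_t] \ge 0$ by iterated conditioning and $\rho_t(Y \cdot \V) \ge 0$ follows. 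Conditional cash invariance uses the bijection $\Gamma(X+m) \to \Gamma(X)$, $Y \mapsto Y - m e_0$ (via $v^0 \equiv 1$) together with cash invariance of $\rho_t$. Monotonicity exploits that $\cA_{t+1}$ is downward-closed, so $X' \le X$ forces $\Gamma(X) \subseteq \Gamma(X')$. Conditional convexity and positive homogeneity transfer from $\rho_t$ via the convex-cone structure of $\cA_{t+1}$, provided the family $\{\rho_t(Y \cdot \V) : Y \in \Gamma(X)\}$ is downward directed—which holds via $\cF_t$-measurable switching $Y_3 := Y_1 \one{F} + Y_2 \one{F^c}$ for $F = \{\rho_t(Y_1\cdot\V) \le \rho_t(Y_2\cdot\V)\} \in \cF_t$, using stability of $\cA_{t+1}$ under $\cF_{t+1}$-measurable positive scaling together with subadditivity and $\cF_t$-positive homogeneity of $\rho_t$.

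For (b), the inclusion $K_t(\V) \cdot \V + \cA_{t+1} \subseteq \cB_t$ is immediate: if $X = Y \cdot \V + C$ with $Y \in K_t(\V)$ and $C \in \cA_{t+1}$, then $Y$ is feasible for $X$ and $\ce_t(X) \le \rho_t(Y \cdot \V) \le 0$. For $\cB_t \subseteq \ol{K_t(\V) \cdot \V + \cA_{t+1}}$, fix $X$ with $\ce_t(X) \le 0$; for each scalar $\delta > 0$ the essinf produces a feasible $Y^\delta$ with $\rho_t(Y^\delta \cdot \V) \le \delta$, whence $Y^\delta - \delta e_0 \in K_t(\V)$ and
\[ X - \delta = (Y^\delta - \delta e_0) \cdot \V + (X - Y^\delta \cdot \V) \in K_t(\V) \cdot \V + \cA_{t+1}. \]
Letting $\delta \to 0$ in norm yields $X \in \ol{K_t(\V) \cdot \V + \cA_{t+1}}$.

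The main obstacle is the remaining inclusion $\ol{K_t(\V) \cdot \V + \cA_{t+1}} \subseteq \cB_t$, equivalent to the Fatou (weak*-closedness) property of $\cB_t$. I would exploit the equivalent reformulation $\ce_t(X) = \essinf\{m \in L^\infty_t : X - m \in K_t(\V)\cdot\V + \cA_{t+1}\}$ (obtained via $Y = Y' + m e_0$) and argue that this essinf is unchanged when the un-closed acceptance set is replaced by its weak*-closure, so that $\ce_t$ coincides with the canonical coherent risk measure of the weak*-closed cone $\ol{K_t(\V) \cdot \V + \cA_{t+1}}$, which automatically enjoys Fatou. The delicate step is extracting, from a weak* approximating sequence $Y^n \cdot \V + C^n \to X - m$, an honest decomposition of $X - m - \epsilon$ for each scalar $\epsilon > 0$; this is where I expect to invoke the weak*-closedness of $\cA_{t+1}$ and a Komlos-type convex combination of $(Y^n)$ in order to upgrade weak* approximation to a genuine feasible element.
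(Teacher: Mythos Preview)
Your overall structure matches the paper's: verify the coherent-risk-measure axioms (the paper simply says this is ``easy''), then establish the two inclusions between $\cB_t$ and $\ol{K_t(\V)\cdot\V + \cA_{t+1}}$. The directedness argument via $\cF_t$-measurable switching is exactly what the paper uses. Two points deserve comment.

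\medskip
\textbf{The inclusion $\cB_t\subseteq\ol{K_t(\V)\cdot\V+\cA_{t+1}}$.} Your step ``for each scalar $\delta>0$ the essinf produces a feasible $Y^\delta$ with $\rho_t(Y^\delta\cdot\V)\le\delta$'' is not justified by downward directedness alone: directedness yields a sequence $a_n:=\rho_t(Y_n\cdot\V)\downarrow\ce_t(X)\le 0$ almost surely, but there need not be a single $n$ (or a single bounded pasting of the $Y_n$) with $a_n\le\delta$ everywhere, since the $Y_n$ have no uniform $L^\infty$-bound. The paper sidesteps this by working with the sequence directly: set $U_n:=Y_n+(\ce_t(X)-a_n)e_0$, so that $\rho_t(U_n\cdot\V)=\ce_t(X)\le 0$ and hence $U_n\in K_t(\V)$; then
\[
X_n:=U_n\cdot\V+(X-Y_n\cdot\V)=X+\ce_t(X)-a_n\in K_t(\V)\cdot\V+\cA_{t+1},
\]
and $X_n$ is uniformly bounded (between $X+\ce_t(X)-a_1$ and $X$) and converges a.s.\ to $X$, hence weak$^*$. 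Your norm-approximation $X-\delta$ is replaced by the random shift $X+\ce_t(X)-a_n$; no single $Y^\delta$ is needed.

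\medskip
\textbf{The inclusion $\ol{K_t(\V)\cdot\V+\cA_{t+1}}\subseteq\cB_t$.} You correctly flag this as requiring weak$^*$-closedness of $\cB_t$ (equivalently, the Fatou property for $\ce_t$), and propose a Komlos-type extraction. The paper does \emph{not} carry out any such argument: after showing $K_t(\V)\cdot\V+\cA_{t+1}\subseteq\cB_t$ via subadditivity of $\ce_t$, it simply asserts ``Since $\cB_t$ is weak$^*$-closed, the result follows.'' So your concern is legitimate, but the paper treats this closedness as part of the ``easy'' verification that $\ce_t$ is a conditional coherent risk measure, without giving details. Your proposed route via the reformulation $\ce_t(X)=\essinf\{m\in L^\infty_t:X-m\in K_t(\V)\cdot\V+\cA_{t+1}\}$ is reasonable, but the Komlos step you sketch is genuinely delicate and not supplied in the paper either.
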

Now (i) implies that $\cB_t=\cA_t$ and so, from this result, the implication $(i)\Rightarrow (ii)$ is an immediate consequence by induction.

Note that, by a similar argument, it is easy to see that (i) is {\em equivalent} to the statement
\begin{equation}\label{unif}
\text{For each }t,\; \cA_t(\V)=\ol{\oplus_t^{T-1}K_t(\cA_0,\V)}.
\end{equation}

It is easy to show that $\ce_t$ is a conditional coherent risk measure. It remains
to show (\ref{induct}).
For the  inclusion $\cB_t\subseteq \ol{K_t(\V)\cdot \V + \cA_{t+1}}$, let $X\in \cB_t$. We first prove that the set $S_t(X):= \{ \rho_t(Y\cdot \V ) :\; Y\in L^\infty_{t+1}(\R^{d+1})\text{ and }X-Y\cdot \V\in \cA_{t+1}\}$ is directed downwards. To see this, take $Y, Z\in L^\infty_{t+1}(\R^{d+1})$ such that $X-Y\cdot \V\in \cA_{t+1}$ and $X-Z\cdot \V\in \cA_{t+1}$. Let $F=\{ \rho_t(Y\cdot \V )\leq \rho_t(Z\cdot \V)\}$ and define $W=X1_F+Y1_{F^c}$. Then $W\in  L^\infty_{t+1}(\R^{d+1})$ and $X-W\cdot \V\in \cA_{t+1}$. Then $\rho_t(W\cdot \V)=\rho_t(Y\cdot \V)1_F+\rho_t(Z\cdot \V)1_{F^c}=\min(\rho_t(Y\cdot \V),\rho_t(Z\cdot \V))$.

   Thus there exists a sequence $Y_n\in S_t(X)$ such that
$a_n:= \rho_t(Y_n\cdot \V )\downarrow \ce_t(X)$. Setting $U_n := Y_n+(\ce_t(X)-a_n)e_0$, where $e_0:=(1,0,\ldots, 0)$, it is clear that $U_n\in K_t(\V)\cdot \V$. Defining $X_n:=U_n\cdot \V+(X-Y_n\cdot \V)$ we see that $X_n\in K_t(\V) \cdot \V+\cA_{t+1}$ and the sequence $X_n = X+\ce_t(X)-a_n$
is uniformly bounded and converges a.s. to $X$. Therefore $X_n$ converges weakly$^*$ to X
and so $X\in \ol{K_t(\V)\cdot \V + \cA_{t+1}}$.

To prove the inclusion $\cB_t\supseteq \ol{K_t(\V) + \cA_{t+1}}$,
suppose that $X\in \cA_{t+1}$ and $Y\in K_t(\V)$.

  Then clearly $\rho_t(Y\cdot \V)\in S_t(Y\cdot \V)$ and so  $\ce_t(Y\cdot \V )\leq\rho_t(Y\cdot \V )\leq 0$, and for
$X\in\cA_{t+1}$ we have  $\ce_t(X)\leq \rho_t(0) = 0$, so $\ce_t(X+Y\cdot \V)\leq \ce_t(X)+\ce_t(Y\cdot \V)\leq 0$ so that $X+Y\cdot \V\in \cB_t$. Since $\cB_t$ is weak$^*$-closed, the result follows.

{\em Equivalence of (ii) and (iii)}

 Assume now that $\uB$ is a weak$^*$-closed convex cone in $\ul\cL^\infty(\R^{d+1})$ which is arbitrage-free, so that $\uB^{**} = \uB$. Define 
 \[ K_t(\uB) := \{ X \in\ul \cL^\infty(\cF_{t+1},\R^{d+1}): \alpha X \in \uB \text{ for any } \alpha \in \ul L^\infty_+(\cF_t) \}.\]
Recall that $\uB_t= \{ X \in\ul \cL^\infty(\cF_{T},\R^{d+1}): \alpha X \in \uB \text{ for any } \alpha \in \ul L^\infty_+(\cF_t) \}$ and $\uB$ is predictably representable if 
\[\uB = \ol{\oplus_{t=0}^{T-1} K_s(\uB)}.\]
 We rephrase (ii) and (iii) as follows
\begin{enumerate}[(i')]
\setcounter{enumi}{1}
\item
$\uB$ is predictably representable; and
\item
$\uB^*$ is predictably stable.
\end{enumerate}
This is sufficent, since we apply it to the case where $\uB=\cA_0(\mathbf{V})$.

\emph{(ii') $\Rightarrow$ (iii'):} Assuming $\uB$ is predictably representable, it follows from \Cref{thm:crucialClaim} that
\[ \uB = \ol{\oplus_{t=0}^{T-1} K_s(\uB)}^{w^*} =  \ol{\oplus_{t=0}^{T-1} \cM_s(\uB^*)^*}^{w^*}.\]
Taking the dual, we find that
\[   \uB^* = \cap_{s=t}^{T-1} \cM_s(\uB^*)^{**} = \cap_{t=0}^{T-1} \ol{\conv}\cM_s(\uB^*)  \]
where the last equality follows from the Bipolar Theorem for a locally convex topological vector space (see  the Fenchel-Moreau duality Theorem in \cite{R70}). Hence, $ \uB^* = [\uB^*]$, and by \Cref{lem:A2}, $\uB^*$ is predictably stable.

\emph{(iii') $\Rightarrow$ (ii'):} Assuming $\uB$ is a weak$^*$-closed convex  cone, note that $\uB^*$ is a convex cone closed in $(\ul\cL^1, \sigma(\ul\cL^1, \ul\cL^\infty))$. Assuming further that $\uB^*$ is \emph{stable},
\begin{align*}
\cB^* &= \cap_t \cM_t(\cB^*) &\text{by \Cref{lem:A1}}
\\&=  \cap_t K_t(\cB)^* &\text{by \cref{eq:MandK}}.
\end{align*}
Now we may apply \Cref{lem:coneflip} to deduce
\[ \uB\equiv \uB^{**} = \ol{\oplus_{t=0}^{T-1} K_s(\uB)}^{w^*}\]
and $\uB$ is predictably representable, as required.

Now we show that (iii) implies the uniform, time-shifted, version of (ii)':
$$
\cA_t(\V)=\ol{\oplus_{s=t}^{T-1}K_s(\cA_0,\V)},
$$
which is sufficient for (i) to hold. But this follows immediately from the observation that (iii)' implies the same condition holds for the filtration $(\cF_t,\ldots, \cF_T)$.
\end{proof}
\vfill\eject
All that remains is to give the
\begin{proof}[Proof of \Cref{thm:crucialClaim}] We set $\cB =\cAn(\mathbf{V})$,   as above. 

First we prove that $\cM_t(\cB^*) \subset K_t(\cB)^*$. For arbitrary $ Z\in \cM_t(\cB^*)$, there exist $Z'\in \cB^*$ and $\alpha \in\ul L^0_+(\cF_t)$ with $\alpha Z' \in\ul \cL^1$ and $Z|_{t+1}=\alpha Z'|_{t+1}$. 

Note that, for any $X \in K_t(\cB)$,
\[ \bE[Z\cdot X]=\bE[Z|_{t+1}\cdot X] =\bE[\alpha Z'|_{t+1}\cdot X] = \lim_{n \to \infty} \bE[(\alpha \ind{\alpha \le n}X)\cdot Z'|_{t+1}] \le 0,\]
since $\alpha \ind{\alpha \le n} X\in \cB$ and $Z'\in \cB^*$. Hence $Z \in K_t(\cB)$, and since $Z$ is arbitrary, we have shown that $\cM_t(\cB^*) \subset K_t(\cB)^*$. 

For the reverse inclusion, $\cM_t(\cB^*)^* \subset K_t(\cB)$, note that $\cB^* \subset \cM_t(\cB^*) $ implies $ \cM_t(\cB^*)^*  \subset \cB$, and 
\begin{align*}
\ul{\cL}^\infty_+(\ul{\cF}_t) \cM_s(\cD ) =\cM_s(\cD ) &\qiq  \for X\in \cM_t(\cB^*)^*, \quad  g \in \ul{\cL}^\infty_+(\ul{\cF}_t),\quad \bE[X\cdot gZ] \le 0
\\& \qiq \ul{\cL}^\infty_+(\ul{\cF}_t) \cM_t(\cB^*)^* = \cM_t(\cB^*)^*.
\end{align*}
Define \[\cB_t:= \{ X \in\ul \cL^\infty(\ul\cF_T,\R^{d+1}): g X \in \cB \text{ for any } g \in\ul L^\infty_+(\ul\cF_t) \}.\]
Thus $\cM_t(\cB^*)^* \subseteq \cB_t$.
To finish the proof, we need only show that $X \in \cM_t(\cB^*)^*$ is $\cF_{t+1}$-measurable, since $\cB_t \cap\ul \cL^\infty(\ul\cF_{t+1},\R^{d+1}) = K_t(\cB)$.

To this end, note that for any $Z \in\ul \cL^1(\R^{d+1})$, it is true that $Z-Z|_{t+1} \in \cM_t(\cB^*)$, whence $\bE[(Z-Z|_{t+1})\cdot X] \le 0$. We deduce that 
\[ \bE[(Z-Z|_{t+1})\cdot X] = \bE[(X-X|_{t+1})\cdot Z] \le 0 \qquad \forall Z \in \ul\cL^1(\R^{d+1}),\]
and $X= X|_{t+1}$ $\P$-a.s..
\end{proof}

\bibliographystyle{plain}	
\bibliography{bibliog}


\end{document}